\renewcommand{\Pr}{\ensuremath{\operatorname{Pr}}}
\newtheorem{theorem}{\bf Theorem}
\newtheorem{definition}{\bf Definition}
\newcounter{step}
\newlength{\totlinewidth}
\newenvironment{algorithm}{%
  \rule{\linewidth}{1pt}
  \begin{list}{}%
    {\usecounter{step}%
      \settowidth{\labelwidth}{\textbf{Step 2:}}%
      \setlength{\leftmargin}{\labelwidth}%
      \setlength{\topsep}{-2pt}%
      \addtolength{\leftmargin}{\labelsep}%
      \addtolength{\leftmargin}{2mm}%
      \setlength{\rightmargin}{2mm}%
      \setlength{\totlinewidth}{\linewidth}%
      \addtolength{\totlinewidth}{\leftmargin}%
      \addtolength{\totlinewidth}{\rightmargin}%
      \setlength{\parsep}{0mm}%
      \raggedright}}%
  {\end{list}%
  \rule{\linewidth}{1pt}}
\newcounter{substep}
\newlength{\aligntop}
\newlength{\alignbot}
\begin{document}
\clearpage
\title{\LARGE Echo State Networks for Self-Organizing Resource Allocation in LTE-U with Uplink-Downlink Decoupling \vspace{-0.2cm}}
%
\author{{Mingzhe Chen\IEEEauthorrefmark{1}}, Walid Saad\IEEEauthorrefmark{2}, and Changchuan Yin\IEEEauthorrefmark{1}\vspace*{0em}\\
\authorblockA{\small \IEEEauthorrefmark{1}Beijing Laboratory of Advanced Information Network,\\ 
Beijing University of Posts and Telecommunications, Beijing, China 100876\\
Email: \protect\url{chenmingzhe@bupt.edu.cn} and \protect\url{ccyin@ieee.org.} \\
\IEEEauthorrefmark{2}Wireless@VT, Bradley Department of Electrical and Computer Engineering, Virginia Tech, Blacksburg, VA, USA, Email: \protect\url{walids@vt.edu}\\
\vspace{-0.8cm}
 }

%
%
%
%

\maketitle
\thispagestyle{empty}
\vspace{0cm}
\begin{abstract}
Uplink-downlink decoupling in which users can be associated to different base stations in the uplink and downlink of heterogeneous small cell networks (SCNs) has attracted significant attention recently. However, most existing works focus on simple association mechanisms in LTE SCNs that operate
only in the licensed band. In contrast, in this paper, the problem of resource allocation with uplink-downlink decoupling is studied for an
SCN that incorporates LTE in the unlicensed band (LTE-U). Here, the users can access both licensed and unlicensed bands while being associated to different base stations. This problem is formulated as a noncooperative game that incorporates user association, spectrum allocation, and load balancing.
To solve this problem, a distributed algorithm based on the machine learning framework of \emph{echo state networks} {(ESNs)} is proposed using which the small base stations autonomously choose their optimal bands allocation strategies while having only limited information on the network's and users' states. It is shown that the proposed algorithm converges to a stationary mixed-strategy distribution which constitutes a mixed strategy Nash equilibrium for the studied game. Simulation results show that the proposed approach yields significant gain, in terms of the sum-rate of the $50$th percentile of users, that reaches up to {$167\%$} compared to {a Q-learning algorithm}. The results also show that ESN significantly provides a considerable reduction of information exchange for the wireless network.
\end{abstract}
\vspace{0cm}
{\small {\emph{Index Terms}---game theory; resource allocation; heterogeneous networks; reinforcement learning; LTE-U.}}

\newpage
\section{Introduction}
\label{sec:intro}
The recent surge in wireless services has led to significant changes in existing cellular systems \cite{1}. In particular, the next-generation of cellular systems will be based on small cell networks (SCNs) that rely on low-cost, low-power small base stations (SBSs). The ultra dense nature of SCNs coupled with the transmit power disparity between SBSs, constitute a key motivation for the use of uplink-downlink decoupling techniques \cite{2,3} in which users can associate to different SBSs in the uplink and downlink, respectively. Such techniques have become recently very popular, particularly with the emergence of uplink-centric applications \cite{LearningHow} such as machine-to-machine communications {and} social networks. 

The existing literature has studied a number of problems related to uplink-downlink decoupling {such as in \cite{2} and \cite{3}}. In \cite{2}, the authors delineate the main benefits of decoupling the uplink and downlink, and propose an optimal decoupling association strategy that maximizes data rate. The work in \cite{3} investigates the throughput and outage gains of uplink-downlink decoupling using a simulation approach. Despite the promising results, these existing works are restricted to performance analysis and centralized optimization approaches that may not scale well in a dense and heterogeneous SCN. Moreover, these existing works are restricted to classical LTE networks in which the devices and SBSs can access only a single, licensed band. 

Recently, there has been {a} significant interest in studying how LTE-based SCNs can operate in the unlicensed band (LTE-U){\cite{5,6,7,2016CULTEGuan,Ko2016A,8,9,10,11,12,AMultiGame,HolisticSmallCell,zhang2015hierarchical}}. LTE-U presents many challenges in terms of {spectrum allocation, user association, interference management, and 
co-existence \cite{5,6,7,2016CULTEGuan}}.   
In \cite{5}, optimal resource allocation algorithms are proposed for both dual band femtocell and integrated femto-WiFi networks. The authors in \cite{6} develop a hybrid method to perform both traffic offloading and resource sharing in an LTE-U scenario {using a co-existence mechanism based on optimizing the duty cycle of the system.} In \cite{7}, {the authors analyze, using stochastic geometry, the performance of LTE-U with continuous transmission, duty cycle, and listen-before-talk (LBT) co-existence mechanisms}. {The authors in \cite{2016CULTEGuan} propose a cognitive co-existence scheme to enable spectrum sharing between LTE-U and WiFi networks.} However, most existing works on LTE-U {\cite{5,6,7,2016CULTEGuan,Ko2016A,8,9,10,11,12,AMultiGame,HolisticSmallCell,zhang2015hierarchical}} have focused on performance analysis and resource allocation {in LTE-U systems, under conventional association methods}. Indeed, none of these works analyzed the potential of uplink-downlink decoupling in LTE-U. LTE-U provides an ideal setting to perform uplink-downlink {decoupling since} the possibility of uplink-downlink decoupling exists not only between base stations but also between the licensed and unlicensed bands.   

More recently, reinforcement learning (RL) techniques have gained significant attention for developing distributed approaches for resource allocation in LTE and heterogeneous SCNs \cite{13,14,15,16,17,18,19,20,21}. In \cite{14,15,16,17} and \cite{21}, channel selection, network selection, and interference management were addressed using the framework of Q-learning and smoothed best response. In \cite{18,19,20}, regret-based learning approaches are developed to address the problem of interference management, dynamic clustering, and SBSs' on/off. However, none of the existing works on RL \cite{13,14,15,16,17,18,19,20,21} have focused on the LTE-U network and downlink-uplink decoupling. Moreover, most existing algorithms \cite{13,14,15,16,17,18,19,20,21}, require agents to obtain the other agents' value functions \cite{22} and state information, which is not practical for scenarios in which agents are distributed. In contrast, here, our goal is to develop a novel and efficient multi-agent RL algorithm based on recurrent neural networks (RNNs) \cite{23,24,25,26,27,28} whose advantage is that they can store the state {information of the users} and have no need to share value functions between agents.

Since RNNs have the ability to retain state over time, because of their recurrent connections, they are promising candidates for compactly storing moments of series of observations. \emph{Echo state networks} {(ESNs)}, an emerging RNN framework \cite{23,24,25,26,27,28}, are a promising candidate for wireless network modeling, as they are known to be relatively easy to train. Existing literature has studied a number of problems related to ESNs \cite{23,24,25,26}. In \cite{23,24,25}, ESNs are proposed to model reward function, characterize wireless channels, and equalize the non-linear satellite communication channel. The authors in \cite{26}, prove the convergence of ESNs-RL algorithm {in the context} of Markov decision problems and {develop suitable algorithms} to settle {problems}. However, most existing works on ESNs \cite{23,24,25,26} have {mainly} focused on {problems in the operations research literature with little work that investigated the use of ESN in a wireless networking environment}. {Indeed}, none of these works exploited the use of ESNs for resource allocation problem in a wireless network. 

The main contribution of this paper is to develop a novel, self-organizing framework to optimize resource allocation with uplink-downlink decoupling in an LTE-U system. 
We formulate the problem as a noncooperative game in which the players are the SBSs and the macrocell base station (MBS). Each player seeks to find an optimal spectrum allocation scheme to optimize a utility function that captures the sum-rate in terms of downlink and uplink, and balances the licensed and unlicensed {spectra} between users. To solve this resource allocation game, we propose a self-organizing algorithm based on the powerful framework of ESNs \cite{26,27,28}. {Here, we note that the use of a self-organizing approach to schedule the LTE-U resource can reduce the coordination between base stations (BSs) which, in future SCNs, can be significantly limited by the backhaul capacity. Moreover, next-generation cellular networks will be dense and, as such, centralized control can be difficult to implement which has motivated the use of self-organizing approaches for resource allocation such as \cite{14,17,18} and \cite{21}.} Unlike previous studies {such as \cite{5} and \cite{6}}, which rely on the coordination among SBSs and on the knowledge of the entire users' state {information}, the proposed approach requires minimum information to learn the mixed strategy Nash equilibrium {(NE)}. The use of ESNs enables the LTE-U SCN to quickly learn its resource allocation parameters without requiring significant training data. 
The proposed algorithm enables \emph{dual-mode} SBSs to autonomously learn and decide on the allocation of the licensed and unlicensed bands in the uplink and downlink to each user depending on the network environment. Moreover, we show that the proposed ESN algorithm can converge to a mixed strategy {NE}.
To our best knowledge, \emph{this is the first work that exploits the framework of ESNs to optimize resource allocation with uplink-downlink decoupling in LTE-U systems}. 
Simulation results show that the proposed approach yields a performance improvement, in terms of the sum-rate of the $50$th percentile of users, reaching up to {$167\%$} compared to {a Q-learning approach}.

The rest of this paper is organized as follows. The system model is described in Section \uppercase\expandafter{\romannumeral2}. The ESN-based resource allocation algorithm is proposed in Section \uppercase\expandafter{\romannumeral3}. In Section \uppercase\expandafter{\romannumeral4}, numerical simulation results are presented and analyzed. Finally, conclusions are drawn in Section \uppercase\expandafter{\romannumeral5}.

\vspace{-0cm}
\section{System Model and Problem Formulation}
\label{sec:SM}
Consider the downlink and uplink of an SCN that encompasses LTE-U, WiFi access points {(WAPs)}, and a macrocell network. Here, the macrocell tier operates using only the licensed band. The MBS is located at the center of a geographical area. Within this area, we consider a set $\mathcal{N} = \{1,2,\ldots,N_s\}$ of \emph{dual-mode} SBSs that are able to access both the licensed and unlicensed bands. Co-located with this cellular network is a WiFi network that consists of $W$ {WAPs}. In addition, we consider a set $\mathcal{U} = \left\{ {1,2, \cdots ,U} \right\}$ of $U$ LTE-U users which are distributed uniformly over the area of interest. All users can access different SBSs as well as the MBS for transmitting in the downlink or uplink. For this system, we consider an FDD mode for LTE on the licensed band, which splits the licensed band into equal spectrum bands for the downlink and uplink. {We consider an LTE system that uses a TDD duplexing mode along with a duty cycle mechanism to manage the co-existence over the unlicensed band as done in \cite{LTEin} and \cite{LTEU}}. {Using the duty cycle method, the SBSs will use a discontinuous, duty-cycle transmission pattern so as to guarantee the transmission rate of WiFi users. Compared to LBT, in which the SBSs transmits data during continuous time slots, the duty cycle method enables the SBSs to use several discontinuous (not necessarily consecutive) time slots. This discontinuous, duty-cycle transmission method is similar to the popular idea of an almost-blank subframe \cite{7}. Under this method, the time slots on the unlicensed band will be divided between LTE-U and WiFi users. In particular, LTE-U transmits for a fraction $\vartheta$ of time and will be muted for $1-\vartheta$ time which is allocated to the WiFi transmission. 
The LTE-U transmission duty cycle $\eta$ consists of discontinuous time slots which are adaptively adjusted based on the WiFi data rate requirement. A static muting pattern for LTE-U enables all the SBSs transmit (or mute) either synchronously or asynchronously. If the SBSs are muted synchronously, they transmit or mute at the same time. In contrast, if the SBSs are muted asynchronously, the neighboring SBSs can adopt a shifted version of the muting pattern. In our model, we consider the static synchronous muting pattern such as in \cite{6} and \cite{7}. We assume that the WiFi network transmits data during $L_w$ time slots as the LTE network transmits one LTE frame. Consequently, during these $L_w$ time slots, $L_l$ discontinuous time slots are allocated to the LTE-U network and can be normalized as $L={{{L_l}} \mathord{\left/
 {\vphantom {{{L_l}} {{L_w}}}} \right.
 \kern-\nulldelimiterspace} {{L_w}}}$ and $1-L$ fraction of time slots on the unlicensed band is used for the transmission of the WiFi users {as shown in Fig. \ref{f2}.} 
For LTE-U operating {over} the unlicensed band, TDD offers the flexibility to adjust the resource allocation between the downlink and uplink. The {WAPs} will transmit using a standard carrier sense multiple access with collision avoidance (CSMA/CA) protocol {and} its corresponding {request-to-send/clear-to-send (RTS/CTS)} access mechanism. 
\begin{figure}[!t]
  \begin{center}
   \vspace{0cm}
    \includegraphics[width=8cm]{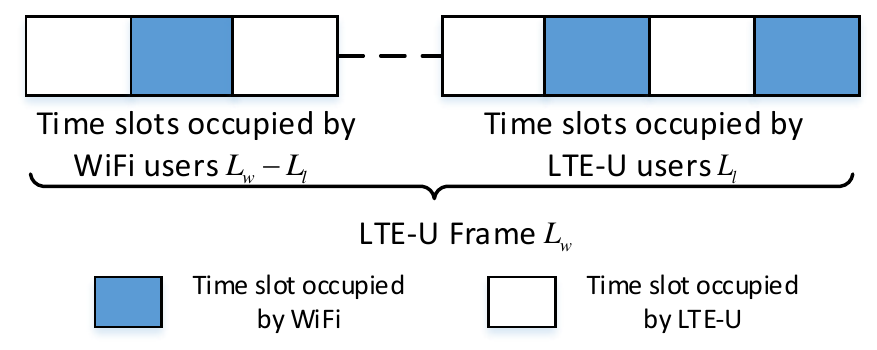}
    \vspace{-0.3cm}
  {\caption{\label{f2}Illustrative example on the time slot distribution between WiFi and LTE-U users.}}
  \end{center}\vspace{-0.8cm}
\end{figure}
\subsection{LTE data rate analysis}
{Hereinafter, we use the term BS to refer to either an SBS or the MBS and we denote by $\mathcal{B}$ the set of BSs} {and $\mathcal{B}_l$ be the set of the {BSs} on the licensed band.} During the connection period, we denote by $c_{lij}^\textrm{DL}$ the downlink capacity and $c_{lji}^\textrm{UL}$ the uplink capacity of user $i$ that is associated with BS $j$ on the licensed band. Thus, the overall long-term downlink and uplink rates of LTE-U user $i$ on the licensed band are given by:
 \begin{equation}\label{eq:Rldl}
  {R_{lij}^\textrm{DL}} = {d_{ij}}{c_{lij}^\textrm{DL}}, 
\end{equation}
 \begin{equation}
{R_{lji}^\textrm{UL}} = {v_{ji}}{c_{lji}^\textrm{UL}}, 
\end{equation}
where
 \begin{equation*}
 c_{lij}^\textrm{DL}\!=\! F_l^\textrm{DL}{\log _2}\left(\!1\!+\! {\frac{{{P_j}{h_{ij}}}}{{\sum\limits_{k \in \mathcal{B}_l,k \ne j} {{P_k}{h_{ik}}}  \!+\! {\sigma ^2}}}} \!\right)\!\!,
\end{equation*}
 \begin{equation*} 
 c_{lji}^\textrm{UL}\! =\! F_l^\textrm{UL}{\log _2}\!\left(\!\!1\!+\!\! {\frac{{{P_u}{h_{ij}}}}{{\sum\limits_{k \in \mathcal{U}, k \ne i} {{P_u}{h_{kj}}}  \!+\! {\sigma ^2}}}}\! \right)\!,
 \end{equation*}
 $d_{ij}$ and $v_{ji}$ are the fraction of the downlink and uplink licensed bands allocated from SBS $j$ to user $i$, respectively, 
$F_l^\textrm{DL}$ and $F_l^\textrm{UL}$ denote, respectively, the downlink and uplink bandwidths on the licensed band, {$P_{j} \in \{P_M,P_P \}$ is the transmit power of BS $j$, where $P_M$ and $P_P$, represent, respectively, the transmit power of the MBS and each SBS}, $P_{u}$ is the transmit power of LTE-U users, ${h_{ij}}$ is the channel gain between user $i$ and BS $j$, and ${\sigma ^2}$ is the power of the Gaussian noise. 

Similarly, the downlink and uplink rates of user $i$ that is transmitting over the unlicensed band are given by:
\begin{equation}
R_{uij}^\textrm{DL} = {\kappa _{ij}}c_{uij}^\textrm{DL},
\end{equation}
\begin{equation}\label{eq:Ruul}
R_{uji}^\textrm{UL} = {\tau_{ji}}c_{uji}^\textrm{UL},
\end{equation}
where
 \begin{equation*}
 c_{uij}^\textrm{DL}={{LF_u}{{\log }_2}\left( {1 + \frac{{{P_P}{h_{ij}}}}{{\sum\limits_{k \in \mathcal{B}_u, k \ne j} {{P_P}{h_{ik}} + {\sigma ^2}} }}} \right)} ,
\end{equation*}

 \begin{equation*}
c_{uji}^\textrm{UL}\! =\! LF_u{\log _2}\!\left(\!\!1\!+\!\! {\frac{{{P_u}{h_{ij}}}}{{\sum\limits_{k \in \mathcal{U}, k \ne i} {{P_u}{h_{kj}}}  \!+\! {\sigma ^2}}}}\! \right)\!,
\end{equation*}
${\kappa _{ij}}$ and ${\tau_{ji}}$ denote, respectively, the downlink and uplink time slots during which user $i$ transmits on the unlicensed band. Note that, the SBSs adopt a TDD mode on the unlicensed band and the LTE-U users on the uplink and downlink share the time slots of the unlicensed band. 
$F_u$ denotes the bandwidth of the unlicensed band, $\mathcal{B}_u$ denotes the SBSs on the unlicensed band, and $L$ $\in [0,1]$ is {the fraction of} time slots during which the LTE SCN uses the unlicensed band. {Here, we note that the interference generated over the unlicensed band comes from other SBSs.} 
\subsection{WiFi data rate analysis}
{We consider a WiFi network at its saturation capacity with binary slotted exponential backoff mechanism such as in \cite{6} and \cite{29}. In this model, each WiFi user will immediately have a packet available for transmission, after the completion of each successful transmission.} {This WiFi model can be applied to the WiFi network based on the different protocols (e.g. 802.11n) such as in \cite{6} and \cite{Small, Admission, ANover}.} {Since in one LTE frame time slot, the WiFi network will occupy $L_w-L_l$ WiFi time slots while the SBSs will use the $L_l$ WiFi time slots, WiFi contention model in \cite{29} can indeed be considered here.} The saturation capacity of $N_w$ users sharing the same unlicensed band can be expressed by \cite{29}: 
 \begin{equation}
 R\left( {{N_w}} \right) = \frac{{{P_\textrm{tr}}{P_s}E\left[ S \right]}}{{\left( {1 - {P_\textrm{tr}}} \right){T_\sigma } + {P_\textrm{tr}}{P_s}{T_s} + {P_\textrm{tr}}\left( {1 - {P_s}} \right){T_c}}},
\end{equation}
where ${P_\textrm{tr}} = 1 - {\left( {1 - \varsigma } \right)^{{N_w}}}$, ${P_\textrm{tr}}$ is the probability that there is at least one transmission in a time slot and $\varsigma$ is the transmission probability of each user. ${P_s} =\left. {N_w}\varsigma {\left( {1 - \varsigma } \right)^{{N_w} - 1}}\small/{P_\textrm{tr}}\right.$,
is the successful transmission on the channel. {Here, $(1-\varsigma)^{N_w}$ is the probability that all WiFi users are not using the unlicensed band but are in the backoff stage or detection stage. The transmission probability $\varsigma$ depends on the conditional collision probability and the backoff windows.} $T_s$ is the average time that the channel is sensed busy because of a successful transmission, $T_c$ is the average time that the channel is sensed busy by each station during a collision, $T_\sigma$ is the duration of an empty slot time, and $E\left[ S \right]$ is the average packet size. {Therefore, ${{P_{\textrm{tr}}}{P_s}E\left[ S \right]}$ is the average amount of payload information successfully transmitted during a given time slot and ${{\left( {1 - {P_{\textrm{tr}}}} \right){T_\sigma } + {P_{\textrm{tr}}}{P_s}{T_s} + {P_{\textrm{tr}}}\left( {1 - {P_s}} \right){T_c}}}$ is the average length of a time slot. Clearly, we can see that $R\left( {{N_w}} \right)$ is actually the average number of bits that were successfully transmitted during the air time occupied by $N_w$ WiFi users. Hence, we can see that $R(N_w)/N_w$ is indeed the average number of bits that were transmitted by each WiFi user. Given this definition, we can then use the average amount of information transmitted in a time slot as a measure of the average rate and, as such, we can compare $R(N_w)/N_w$ with the average rate requirement of a given user since this average rate requirement will effectively be equivalent to the average amount of information transmitted in a time slot.
Note that the number of the users that are associated with the WAPs is considered to be pre-determined and will not change during the operation of the proposed algorithm \cite{2016CULTEGuan}}.

In our model, the WiFi network adopts conventional distributed coordination function (DCF) access and RTS/CTS access mechanisms. Thus, $T_c$ and $T_s$ are given by \cite{29}:
\begin{equation}
 \begin{aligned}
{T_s} = &{{RTS} \mathord{\left/
 {\vphantom {{RTS} C}} \right.
 \kern-\nulldelimiterspace} C} + {{CTS} \mathord{\left/
 {\vphantom {{CTS} C}} \right.
 \kern-\nulldelimiterspace} C} + {{\left( {H + E\left[ P \right]} \right)} \mathord{\left/
 {\vphantom {{\left( {H + E\left[ P \right]} \right)} C}} \right.
 \kern-\nulldelimiterspace} C} \\
 &+ {{ACK} \mathord{\left/
 {\vphantom {{ACK} C}} \right.
 \kern-\nulldelimiterspace} C} + 3SIFS + DIFS + 4\delta ,
 \end{aligned}
 \end{equation}
 \begin{equation}
 {T_c} = {{RTS} \mathord{\left/
 {\vphantom {{RTS} C}} \right.
 \kern-\nulldelimiterspace} C} + DIFS + \delta,
 \end{equation}
where $T_s$ denotes the average time the channel is sensed busy because of a successful transmission, $T_c$ is the average time the channel is sensed busy by each station during a collision, {$H$ is the packet header}, $C$ is the channel bit rate, {$\delta$ is the propagation delay, $ACK$, $DIFS$, $RTS$, and $CTS$ represent, respectively, the time of the acknowledgement, distributed inter-frame space, RTS and CTS.}

{Based on the duty cycle mechanism,  {$\left(1-L\right)$ fraction of time slots} will be allocated to the WiFi users. We assume that the rate requirement of WiFi user is $R_w$. In order to satisfy the WiFi user rate requirement $R_w$, the {fraction} of time slots $1-L$ is given as:
\begin{equation}\label{eq:RNw}
\frac{{R\left( {{N_w}} \right)\left( {1 - L} \right)}}{{{N_w}}} \ge {R_w},
\end{equation}}   
where {${{R({N_w})} \mathord{\left/
{\vphantom {{R({N_w})} {{N_w}}}} \right.
 \kern-\nulldelimiterspace} {{N_w}}}$ is the rate for each WiFi user. From (8), the fraction of time slots on the unlicensed band that are allocated to the LTE-U network is given by ${{L \le 1 - {N_w}{R_w}} \mathord{\left/
 {\vphantom {{L \le 1 - {N_w}{R_w}} {R({N_w})}}} \right.
 \kern-\nulldelimiterspace} {R({N_w})}}$}. {From (\ref{eq:RNw}), we can see that the duty cycle is actually decided by the WiFi user average data rate requirement and the number of the WiFi users in the network.} 

\subsection{Problem formulation}
Given this system model, our goal is to develop an effective spectrum allocation scheme with uplink-downlink decoupling that can allocate the appropriate bandwidth on the licensed band and time slots on the unlicensed band to each user, simultaneously. {The decoupling essentially implies that the downlink and uplink of each user can be associated with different SBSs as well as the LTE MBS. Indeed, we consider the effect of WiFi users on the LTE-U transmissions but we do not consider the users' associations with the WAPs. For example, a user can be associated in the uplink to an LTE-U SBS and in the downlink to the LTE macrocell, or a user can be associated in the uplink to LTE-U SBS 1 and in the downlink to LTE-U SBS 2 that is different from SBS 1.} However, the rate of each {BS} depends not only on its own choice of the allocation action but also on remaining {BSs'} actions. In this regard, we formulate a noncooperative game $\mathcal{G} = \left[ {\mathcal{B},\left\{ {{\mathcal{A}_n}} \right\}_{n \in \mathcal{B}},\left\{ {{u_n}} \right\}}_{n \in \mathcal{B}} \right]$ in which the set of BSs $\mathcal{B}$ are the players including SBSs and the MBS and $u_n$ is the utility function for BS $n$. Each player $n$ has a set ${\mathcal{A}_n} = \left\{ {\boldsymbol{a}_{n,1}, \ldots, \boldsymbol{a}_{n,{\left| {{\mathcal{A}_n}} \right|}}} \right\}$  of actions where $\left| {{\mathcal{A}_n}} \right|$ is the total number of actions. For an SBS $n$, each action ${\boldsymbol{a}_n} = \left( {{\boldsymbol{d}_n},{\boldsymbol{v}_n},{\boldsymbol{\rho}_n}} \right)$, is composed of: (i) the downlink licensed bandwidth allocation ${ \boldsymbol{d}_n} = \left[  {d_{n,1}, \ldots, d_{n,{K_n}}} \right]$, where $K_n$ is the number of all users in the coverage area $\mathcal{L}_b$ of SBS $n$, (ii) the uplink licensed bandwidth allocation ${ \boldsymbol{v}_n} = \left[  {v_{n,1}, \ldots v_{n,{{K_n}}}} \right]$, and, (iii) the time slots allocation on the unlicensed band ${ \boldsymbol{\rho }_n} = \left[  {\kappa_{1,n}, \ldots, \kappa_{{K_n},n}, \tau_{n,1}, \ldots, \tau_{n,{{K_n}}}} \right]$. ${\boldsymbol{d}_n}$, ${\boldsymbol{v}_n}$ and ${\boldsymbol{\rho}_n}$ must satisfy:
\begin{equation}\label{eq:c1}
\sum\nolimits_{j=1}^{K_n}\! {d_{n,j}} \le 1,\:\: \sum\nolimits_{j=1}^{K_n} \!{v_{n,j}} \le 1,
\end{equation}
\begin{equation}\label{eq:c2}
\sum\nolimits_{j = 1}^{K_n}\! \left( {\kappa_{n,j}+\tau_{j,n}} \right) \le 1,
\end{equation}
\begin{equation}\label{eq:c3}
{d_{j,n},v_{n,j},\kappa_{n,j}, \tau_{j,n}} \in \mathcal{Z},
\end{equation}
where $\mathcal{Z} = \left\{ {{1}, \ldots ,{Z}} \right\}$ is a finite set of $Z$ level fractions of spectrum. For example, one can separate the fractions of spectrum into $Z=10$ equally spaced intervals. Then, we will have $\mathcal{Z} = \left\{ {{0}, {0.1}, \ldots ,{0.9}, {1}} \right\}$. For the MBS, each action ${\boldsymbol{a}_m} = \left( {\boldsymbol{d}_m},{\boldsymbol{v}_m}\right)$ is composed of its downlink licensed bandwidth allocation ${ \boldsymbol{d}_m}$ and uplink licensed bandwidth allocation ${ \boldsymbol{v}_m}$. ${\boldsymbol{a}} = \left( {{\boldsymbol{a}_{1}},{\boldsymbol{a}_{2}}, \ldots ,{\boldsymbol{a}_{N_b}}} \right) \in \mathcal{A}$, represents the action profile of all players where $N_b=N_s+1$, expresses the number of BSs including one MBS and $N_s$ SBSs, and $ \mathcal{A} = \prod\nolimits_{n \in N_b} {{\mathcal{A}_{n}}}$. 

To maximize the downlink and uplink rates simultaneously while maintaining load balancing, for each SBS $n$, the utility function needs to capture both the sum data rate and load balancing. Here, load balancing implies that each {SBS} will balance its spectrum allocation between users, while taking into account their capacity. Therefore, we define a utility function $u_{n}$ for SBS $n$ is:
\begin{equation}\label{eq:un}
\begin{split}
u_{n} \left(\boldsymbol{a}_{n}, {\boldsymbol{a}_{-n}}\right) =&\sum\limits_{n = 1}^{K_j} \underbrace{\log _2\left(1+d_{nj}c_{lnj}^{\textrm{DL}}+\eta \kappa_{nj}c_{unj}^{\textrm{DL}}\right)}_{\text{Downlink rate}} \! \\&+\sum\limits_{n = 1}^{K_j} \underbrace{\log _2\left(1+v_{jn}c_{ljn}^{\textrm{UL}}+\tau_{jn}c_{ujn}^{\textrm{UL}}\right)}_{\text{Uplink rate}},
\end{split}
 \end{equation}
where $\boldsymbol{a}_{-n}$ denotes the action profile of all the BSs other than SBS $n$ and $\eta$ is a scaling factor that adjusts the number of users to use the unlicensed band in the downlink due to the high outage probability on the unlicensed band. For example, for $\eta=0.5$, the downlink rates of users on the unlicensed band decreased by half, which, in turn, will decrease the value of utility function, which will then require the SBSs to change their other spectrum allocation schemes in a way to achieve a higher value of utility function. {In fact, (\ref{eq:un}) captures the downlink and uplink sum-rate over the licensed and unlicensed bands based on (\ref{eq:Rldl})-(\ref{eq:Ruul}). {From (\ref{eq:un}), we can see that the downlink rate has no relationship with the uplink rate, which means that the uplink and downlink of each user can be associated with different SBSs and/or MBS.} In (12), the logarithmic function is used to balance the load between the users.} Since the MBS {can only allocate the licensed spectrum to the users}, the utility function $u_{m}$ for the MBS can be expressed by: 
\begin{equation}\label{eq:um}
\begin{split}
u_{m}\! \left(\boldsymbol{a}_{m}, {\boldsymbol{a}_{-m}}\right)\! \!=\!\!&\sum\limits_{n = 1}^{K_m} {\log _2\!\left(1\!+\!d_{mj}c_{lmj}^{\textrm{DL}}\right)}\!\!+\!\!\!\sum\limits_{n = 1}^{K_m} {\log _2\!\left(1\!+\!v_{jm}c_{ljm}^{\textrm{UL}}\right)}.
\end{split}
 \end{equation}
Note that, hereinafter, we use the utility function $u_n$ to refer to either the utility function $u_n$ of SBS $n$ or the utility function $u_m$ of the MBS.
Given a finite set $\mathcal{A}$, $\Delta\left( \mathcal{A} \right)$ represents the set of all probability distributions over the elements of {$\mathcal{A}$}. Let ${\boldsymbol{\pi} _n}= \left[ {{\pi _{n,\boldsymbol{a}_{1}}}, \ldots ,{\pi _{n,\boldsymbol{a}_{\left| {{A_n}} \right|}}}} \right]$ be a probability distribution using which BS $n$ selects a given action from $\mathcal{A}_n$. Consequently, ${\pi _{n,\boldsymbol{a}_{i}}} = \Pr \left( {{\boldsymbol{a}_n} = {\boldsymbol{a}_{n,i}}} \right)$ is BS $n$'s mixed strategy where $\boldsymbol{a}_n$ is the action that BS $n$ adopts. Then, the expected reward that BS $n$ adopts the spectrum allocation scheme $i$ given the mixed strategy   long-term performance metric can be written as follows:
\begin{equation}\label{eq:eun}
{\mathbb{E}}\left[ {{u_n}\left( {{{\boldsymbol{a}_{n,i}}}} \right)} \right] = \sum\limits_{{\boldsymbol{a}_{ - n}} \in {\mathcal{A}_{ - n}}} {{u_n}\left( {{{\boldsymbol{a}_{n,i}}},{\boldsymbol{a}_{ - n}}} \right){\pi _{ - n, \boldsymbol{a}_{ - n}}}},
\end{equation}
where ${\pi _{ - n,{\boldsymbol{a}_{ - n}}}} \!\!=\!\sum\nolimits_{{{\boldsymbol{a}_{n,i}}} \in { \mathcal{A}_n}} {\!\pi \!\left( {{{\boldsymbol{a}_{n,i}}},{\boldsymbol{a}_{ - n}}} \right)} $ denotes the marginal probabilities distribution over the action set of BS $n$.   

 \section{Echo State Networks for Self-Organizing Resource Allocation} 
Given the proposed wireless model {in} Section \uppercase\expandafter{\romannumeral2}, our next goal is to solve the proposed resource allocation game. To solve the game, our goal is to find the mixed-strategy {NE}. The mixed NE is formally defined as follows:

\begin{definition}\emph{(Mixed Nash equilibrium): A mixed strategy profile ${\boldsymbol{\pi} ^*} = \left( {\boldsymbol{\pi} _1^*, \ldots ,\boldsymbol{\pi} _{N_b}^*} \right) = \left( {\boldsymbol{\pi} _n^*,\boldsymbol{\pi}_{ - n}^*} \right)$ is a mixed strategy Nash equilibrium if, $\forall n \in \mathcal{B}$ and $\forall \boldsymbol{\pi} _n \in \Delta \left( {{\mathcal{A}_n}} \right)$, it holds that:
\begin{equation}\label{eq:mNE}
{\tilde u_n}\left( {\boldsymbol{\pi} _n^*,\boldsymbol{\pi} _{ - n}^*} \right) \ge {\tilde u_n}\left( {{\boldsymbol{\pi} _n},\boldsymbol{\pi} _{ - n}^*} \right),
\end{equation}
where 
\begin{equation}
{\tilde u_n}\left( {{\boldsymbol{\pi} _n},{\boldsymbol{\pi} _{ - n}}} \right) =\sum\limits_{{\boldsymbol{a}} \in {\mathcal{A}}} {{u_n}\left( {{\boldsymbol{a}}} \right)\prod\limits_{n \in \mathcal{B}} {{\pi _{n,{a_n}}}}} 
\end{equation}
is the expected utility of BS $n$ when selecting the mixed strategy $\boldsymbol{\pi} _n$.  }
\end{definition}

For our game, the mixed NE represents each BS maximizes its data rate and balances the licensed and unlicensed {bands} between {the} users. 
However, in a dense SCN, each SBS may not know the entire users' states information including interference, location, and path loss, which makes it challenging to solve the proposed game in the presence of limited information. To find the mixed NE, we need to develop a novel learning algorithm based on the powerful framework of \emph{echo state networks}.

ESNs are {a new type} of recurrent neural networks \cite{26,27,28} that can be easy to train and can track the state of a network over time. Learning algorithms based on ESNs can learn to mimic a target system with arbitrary accuracy and can automatically adapt spectrum allocation to the change of network states. Consequently, ESNs are promising candidates for solving wireless resource allocation games in SCNs whereby each SBS can use an ESN approach to simulate the users' states, estimate the value of the aggregate utility function, and find the mixed strategy NE of the game. Here, finding the mixed strategy NE of the proposed game refers to the process of allocating appropriate bandwidth on the licensed band and time slots on the unlicensed band to each user, simultaneously. 
Compared to traditional RL approaches such as Q-learning \cite{30}, an approach based on ESNs can quickly learn the resource allocation parameter without requiring significant training data and it has the ability to adapt the optimal spectrum allocation scheme over time, due to the use of recurrent neural network concepts. 

In order to find the mixed strategy NE of the proposed game, we begin by describing how to use ESNs for learning resource allocation parameters. Then, we propose an ESN-based approach to find the mixed strategy NE. Finally, we prove the convergence of the proposed learning algorithm with different learning rules. 


\begin{figure*}[htbp]
  \centering
  \includegraphics[width=0.8\linewidth]{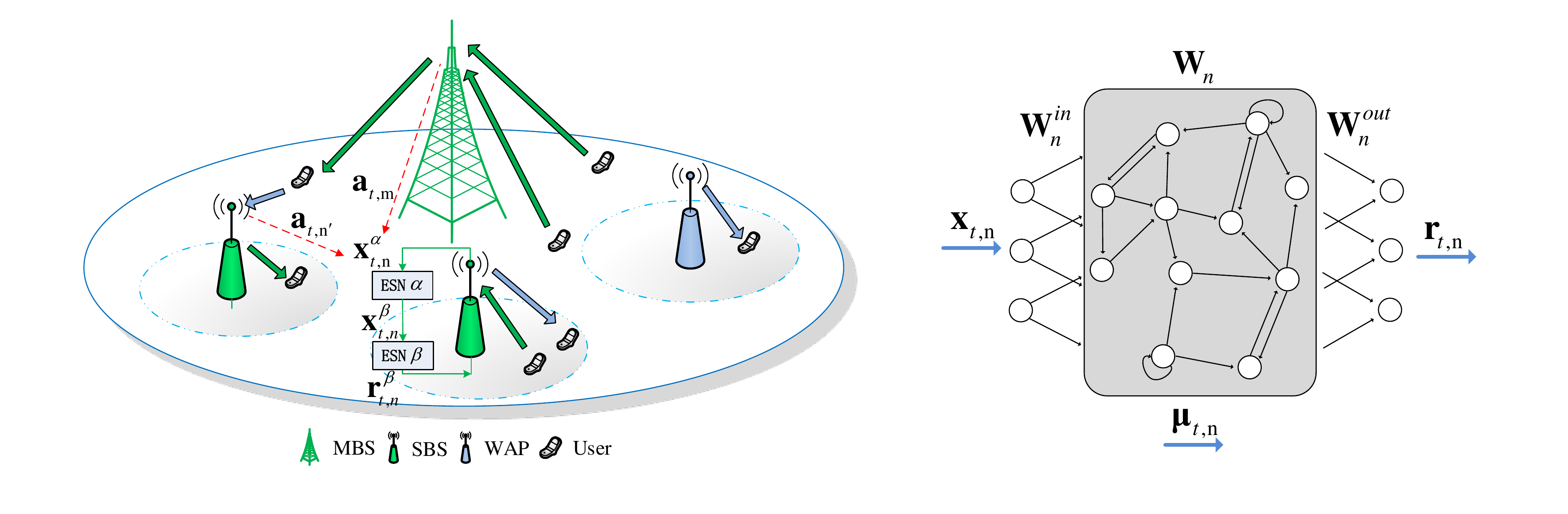}
  \caption{An SCN using ESNs with uplink-downlink decoupling in LTE-U. The left hand shows the proposed ESNs learning algorithm running in SBS. The right hand shows the architecture of {an ESN}.}\label{fig.2}
\vspace{-0.6cm}
\end{figure*}

\subsection{{Formulation based on Echo State Networks}}
As shown in Fig. \ref{fig.2}, two ESNs are used in our proposed algorithm. Each such algorithm consists of four components: a) agents, b) inputs, c) actions, and d) reward function. The agents are the players who using ESNs learning algorithm in our proposed game. The inputs, actions, and reward functions are used to find the mixed strategy NE. In our proposed algorithm, the first ESN is proposed to approximate the utility function of BSs and the second ESN is used to choose the optimal spectrum allocation action. Thus, they have different inputs and reward functions but the same agents and actions. {Here, the reward function of the first ESN captures the gain of each spectrum allocation scheme and the reward function of the second ESN captures the expected gain of each spectrum allocation scheme. These reward functions will be based on the utility functions in (\ref{eq:un})-(\ref{eq:eun})}. Hereinafter, we use {ESN} $\alpha$ and {ESN} $\beta$ to refer to the first {ESN} and the second {ESN}, respectively. The ESN models are thus defined as follows: 

$\bullet$ \textbf{Agents}: The agents are the BSs in $\mathcal{B}$.  

$\bullet$ \textbf{Inputs:} $\boldsymbol{x}_{t,n}^\alpha\!=\!\left[ {\boldsymbol{a}_{1}\!\left(t\right)\!, \cdots\!, \boldsymbol{a}_{n-1}\!\left(t\right), \boldsymbol{a}_{n+1}\!\left(t\right), \cdots\!, \boldsymbol{a}_{N_b}\!\left(t\right) } \right]^{\mathrm{T}}$ which represents the state of network at time $t$, $\boldsymbol{a}_{i}\left(t\right)$ is the spectrum allocation scheme that BS $i$ adopts at time $t$.  

$\boldsymbol{x}_{t,n}^\beta=\left[ {x_{t,n1}^\textrm{DL}, \cdots , x_{t,K_nj}^\textrm{DL}, x_{t,n1}^\textrm{UL}, \cdots , x_{t,K_nj}^\textrm{UL}} \right]^{\mathrm{T}}$ which represents the association of the LTE-U users of BS $n$ at time $t$ ( i.e., $x_{t,ni}^\textrm{DL}=1$ when user $i$ connects to BS $n$ in the downlink). Actually, at each time, all the users send the connection requests to all BSs. Therefore, inputs only have the request state and stable state, i.e., $\left[1, \dots, 1\right]$ and $\left[1, \dots, 0\right]$. 

$\bullet$ \textbf{Actions:} {Each BS $n$ can only use} one spectrum allocation action at time $t$, i.e., ${{\boldsymbol{a}_n}\left( t \right) = {\boldsymbol{a}_{n,i}}}$, where ${\boldsymbol{a}_{n,i}} \in {\mathcal{A}_n}$. Therefore, ${\boldsymbol{a}_{n,i}}$ is specified as follows:
\begin{equation}
\boldsymbol{a}_{n,i}= {\left[ {\begin{array}{*{20}{c}}
{d_{1,n}^{i} \cdots d_{{K_n,n}}^{i}, v_{n,1}^{i} \cdots v_{{n,K_n}}^{i}}\\
{\kappa_{1,n}^{i} \cdots \kappa_{{K_n,n}}^{i}, \tau_{n,1}^{i} \cdots \tau_{{n,K_n}}^{i}}
\end{array}} \right]^\mathrm{T}},
\end{equation}
where $d_{j,n}^{i}$ is the fraction of the downlink licensed band that SBS $n$ allocates to user $j$ adopting the spectrum allocation scheme $i$, $v_{n,j}^{i}$ denotes the fraction of the uplink licensed band that SBS $n$ allocates to user $j$ adopting spectrum allocation scheme $i$, $\kappa_{j,n}^{i}$ and $\tau_{n,j}^{i}$ respectively, denote the fractions of the unlicensed band that SBS $n$ allocates to user $j$ in the downlink and uplink. We consider the case in which each user can access the licensed and unlicensed bands at the same time. The licensed bandwidths and time slots on the unlicensed band must satisfy (\ref{eq:c1})-(\ref{eq:c3}). Thus, ${\boldsymbol{a}_{n,i}}$ represents SBS $n$ adopting spectrum allocation action $i$ to the users. {Since the MBS can only allocate the licensed spectrum to the users, the action of the MBS in ${\mathcal{A}_m}$} is given by:
\begin{equation}
\boldsymbol{a}_{m,i}= {\left[ {\begin{array}{*{20}{c}}
{d_{1,m}^{i} \cdots d_{{K_m,m}}^{i}, v_{m,1}^{i} \cdots v_{{m,K_m}}^{i}}\\
\end{array}} \right]^\mathrm{T}}.
\end{equation}


$\bullet$ \textbf{Reward function:} In our model, the vector of the reward functions are the functions to which ESNs approximate. The reward function of {ESN} $\alpha$ is used to store the reward of spectrum allocation action. Therefore, the reward function is given by: 
\begin{equation}\nonumber
\boldsymbol{r}_{t,n}^\alpha\! \left(\boldsymbol{x}_{t,n}^\alpha, {\mathcal{A}_n}\right)\!=\!\!\left[r_{t,n}^{\alpha,1} \!\!\left(\boldsymbol{x}_{t,n}^\alpha, {\boldsymbol{a}_{n,1}}\right)\!,\! \cdots\!, r_{t,n}^{\alpha, {\left| {{\mathcal{A}_n}} \right|}}\!\!\left(\boldsymbol{x}_{t,n}^\alpha, {\boldsymbol{a}_{n,{{\left| {{\mathcal{A}_n}} \right|}}}}\right)\!\right]^{\mathrm{T}}\!\!\!,
\end{equation}
where $\boldsymbol{r}_{t,n}^\alpha \left(\boldsymbol{x}_{t,n}^\alpha,{\mathcal{A}_n}\right)$ represents the set of spectrum allocation rewards achieved by spectrum allocation schemes for each BS $n$ at time $t$. Therefore, the reward function of action $i$ on BS $n$ is given by:
\begin{equation}
\begin{split}
r_{t,n}^{\alpha,i} \left(\boldsymbol{x}_{t,n}^\alpha, {\boldsymbol{a}_{n,i}}\right) =u_{n} \left(\boldsymbol{a}_{n,i}, {\boldsymbol{a}_{-n}}\right),
\end{split}
 \end{equation}
where {$\boldsymbol{x}_{t,n}^\alpha={\boldsymbol{a}_{-n}}$ represents the input at time $t$ of ESN $\alpha$ and ${\boldsymbol{a}_{-n}}$ is the actions that other BSs adopt at time $t$}. 

The reward function in {ESN} $\beta$ allows each BS to choose the optimal spectrum allocation action based on the expected reward given the actions that other BSs adopt. It can be expressed by:
\begin{equation}
\begin{split}
r_{t,n}^{\beta,i}\left( {{\boldsymbol{x}_{t,n}^\beta},{\boldsymbol{a}_{n,i}}} \right) &=\sum\limits_{{\boldsymbol{x}_{t,n}^\alpha} \in {A_{ - n}}} {r_{t,n}^{\alpha,i} \left(\boldsymbol{x}_{t,n}^\alpha, {\boldsymbol{a}_{n,i}}\right) {\pi _{ - n, \boldsymbol{x}_{t,n}^\alpha}}}\\
&\mathop  = \limits^{\left( a \right)} \sum\limits_{{\boldsymbol{a}_{ - n}} \in {\mathcal{A}_{ - n}}} {{r_{t,n}^{\alpha,i} }\left( {{\boldsymbol{a}_{n,i}},{\boldsymbol{a}_{ - n}}} \right){\pi _{ - n, \boldsymbol{a}_{ - n}}}}\\
&={\mathbb{E}}\left[ {{r_{t,n}^{\alpha,i} }\left( {{\boldsymbol{a}_{n,i}}} \right)} \right],\\
\end{split}
\end{equation}
where (a) stems from the fact that 
$\boldsymbol{x}_{t,n}^\alpha$ represents the actions that all BSs, other than BS $n$, take at time $t$, i.e., $\boldsymbol{x}_{t,n}^\alpha=\boldsymbol{a}_{t,-n}$.

\subsection{{Update based on Echo State Networks}}
In this subsection, we first introduce the {update phase, based on the ESN framework,} that each BS $n$ uses to store and estimate the reward of each spectrum allocation scheme. Then, we {present} the proposed ESN-based approach that each BS $n$ uses to choose the optimal spectrum allocation scheme.  
As shown in Fig. \ref{fig.2}, the internal structure of ESNs for BS $n$ consists of three components: a) input weight matrix $\boldsymbol{W}_n^{in}$, b) recurrent matrix $\boldsymbol{W}_n$, and c) output weight matrix $\boldsymbol{W}_n^{out}$. Given these basic definitions, for each BS $n$, an ESN model is essentially a dynamic neural network, known as the dynamic reservoir, which will be combined with the input $\boldsymbol{x}_{t,n}$ representing network state. Therefore, we first explain how an ESN model can be generated. Mathematically, the dynamic reservoir consists of the input weight matrix $\boldsymbol{W}_n^{in} \in {\mathbb{R}^{N \times 2K_n}}$, and the recurrent matrix $\boldsymbol{W}_n \in {\mathbb{R}^{N \times N}}$, where $N$ is the number of units of the dynamic reservoir that each BS $n$ uses to store the users' states. The output weight matrix $\boldsymbol{W}_n^{out} \in {\mathbb{R}^{{\left| {{\mathcal{A}_n}} \right|} \times \left(N+2K_j\right)}}$ includes the linear readout weights and is trained to approximate the reward function of each BS. The BS's reward function  essentially reflects the rate achieved by that BS. The dynamic reservoir of BS $n$ is therefore given by the pair $\left( \boldsymbol{W}_n^{in}, \boldsymbol{W}_n \right)$  and $\boldsymbol{W}_n$ is defined as a sparse matrix with a spectral radius less than one \cite{27}. $\boldsymbol{W}_n^{in}, \boldsymbol{W}_n$, and $\boldsymbol{W}_n^{out}$ are initially generated randomly by uniform distribution. In this ESN model, one needs to only train $\boldsymbol{W}_n^{out}$ to approximate the reward function which illustrates that ESNs are easy to train \cite{26,27,28}. Even though the dynamic reservoir is initially generated randomly, it will be combined with {the} input to store the users' states and it will also be combined with the trained output matrix to approximate the reward function.

Since the users' associations change depending on the spectrum allocation scheme that each BS adopts, the ESN model of each BS $n$ needs to update its input $\boldsymbol{x}_{t,n}$ and store the users' states, which is done by the dynamic reservoir state ${\boldsymbol{\mu  }_{t,n}}$. Here, ${\boldsymbol{\mu}_{t,n}}$ denotes the users' association results and the users' states for each BS $n$ at time $t$. The dynamic reservoir state for each BS $n$ can be computed as follows:
\begin{equation}\label{eq:utn}
{\boldsymbol{\mu}_{t,n}^j} ={\mathop{f}\nolimits}\!\left( {\boldsymbol{W}_n^j{\boldsymbol{\mu}_{t - 1,n}^j} + \boldsymbol{W}_n^{j,in}{\boldsymbol{x}_{t,n}^j}} \right),
\end{equation}
where $f\left(  \cdot  \right)$ is the tanh function and $j \in \left\{ {\alpha ,\beta } \right\}$. Suppose that, each BS $n$, has ${{\left| {{\mathcal{A}_n}} \right|}}$ spectrum allocation actions, $\boldsymbol{a}_{n,1}, \ldots, {\boldsymbol{a}_{n,{{\left| {{\mathcal{A}_n}} \right|}} }}$, to choose from. Then, the {ESN scheme} will have ${{\left| {{\mathcal{A}_n}} \right|}} $ outputs, one corresponding to each one of those actions. We must train the output matrix $\boldsymbol{W}_n^{j,out}$, so that the output $i$ yields the value of the reward function $r_{t,n}^{j,i} \left(\boldsymbol{x}_{t,n}^j, {\boldsymbol{a}_{n,i}}\right)$ due to action ${\boldsymbol{a}_{n,i}}$ in the input $\boldsymbol{x}_{t,n}^j$:
\begin{equation}\label{eq:rtn}
r_{t,n}^{j,i} \left(\boldsymbol{x}_{t,n}^j, {\boldsymbol{a}_{n,i}}\right) = {\boldsymbol{W}_{t,in}^{j,out}}\left[ {{\boldsymbol{\mu}_{t,n}^j};{\boldsymbol{x}_{t,n}^j}} \right],
\end{equation}
where $\boldsymbol{W}_{t,in}^{j,out}$ denotes the $i$th row of $\boldsymbol{W}_{t,n}^{j,out}$. (\ref{eq:rtn}) is used to estimate the reward of each BS $n$ that adopts any spectrum allocation action after training $\boldsymbol{W}_n^{j,out}$. To train $\boldsymbol{W}_n^{j,out}$, a linear gradient descent approach can be used to derive the following update rule:
\begin{equation}\label{eq:w}
{\boldsymbol{W}_{t + 1,in}^{j,out}}\! = \!{\boldsymbol{W}_{t,in}^{j,out}} + {\lambda_j}\! \left( {e_{t,n}^{j,i} \!-\!r_{t,n}^{j,i} \!\left(\boldsymbol{x}_{t,n}^j, {\boldsymbol{a}_{n,i}} \right)} \!\right)\!\!\left[ {{\boldsymbol{\mu}_{t,n}^j};{\boldsymbol{x}_{t,n}^j}} \right]^{\mathrm{T}},
\end{equation}
where ${\lambda_j}$ is the learning rate for {ESN} $j$ and $e_{t,n}^{j,i}$ is the $j$th actual reward at action $i$ of BS $n$ at time $t$, i.e., $e_{t,n}^{\alpha,i}=u_n^i\left( {{\boldsymbol{a}_{n,i}},{\boldsymbol{a}_{ - n}}} \right)$ and $e_{t,n}^{\beta,i}={\mathbb{E}}\left[ {{u_n}\left( {{\boldsymbol{a}_n}} \right)} \right] $. Note that ${\boldsymbol{a}_{ - n}}$ denotes the actions that all BSs other than BS $n$ adopt now.

\subsection{Reinforcement learning with ESNs algorithm}

To solve the game, we introduce {an} ESN-based reinforcement learning approach to find the mixed strategy NE. The proposed ESN-based reinforcement learning approach consists of {two phases:} ESN $\alpha$ and ESN $\beta$. ESN $\alpha$ is used to approximate the utility function of our proposed game. ESN $\alpha$ stores the reward of utility function at any case which can be used by ESN $\beta$. ESN $\beta$ uses the reward stored in ESN $\alpha$ to find the mixed strategy NE {using} a reinforcement learning approach. In our proposed algorithm, each SBS $n$ needs to calculate the {fraction} of time slots $L$ {on the unlicensed band that is allocated to the LTE-U network based on (8)}, update the users' associations and store users' states based on (\ref{eq:utn}), estimate the rewards of spectrum allocation actions based on (\ref{eq:rtn}), choose the optimal allocation scheme, and update the output matrix $\boldsymbol{W}_n^{out}$ based on (\ref{eq:w}) at each time. 

In order to guarantee that any action always has a {non-zero} probability {to be chosen}, the $\varepsilon$-greedy exploration \cite{30} is adopted in the proposed algorithm. {This} mechanism is responsible for selecting the actions that {each} agent will perform during the learning process {while harmonizing} the tradeoff between exploitation and exploration. Therefore, the probability of BS $n$ playing action $i$ {will be} given by: 
\begin{equation}\small
\Pr \!\left( {{\boldsymbol{a}_n}\left( t \right) \!=\! {\boldsymbol{a}_{n,i}}} \right) \!=\! \left\{ {\begin{array}{*{20}{c}}
{\!\!\!\!\!1 - \varepsilon  + \frac{\varepsilon }{{\left| {{\mathcal{A}_n}} \right|}},\;{\boldsymbol{a}_{n,i}} \!=\! \arg\! \mathop {\max }\limits_{{\boldsymbol{a}_n} \in {\mathcal{A}_n}}\! {{r}_{t,n}^{\beta}}\!\left( \!{{\boldsymbol{x}_{t,n}^\beta},{\boldsymbol{a}_n}} \!\right)\!\!,}\\
{\!\!\!\!\!\!\!\!\!\!\!\!\!\!\!\!\!\!\!\!\!\!\frac{\varepsilon }{{\left| {{\mathcal{A}_n}} \right|}},\;\;\text{otherwise}.\;\,\,\,\;\;\;\;\;\;\;\;\;}
\end{array}} \right. 
\end{equation}
The $\varepsilon$-greedy mechanism decides the probability distribution {over the} action set for each BS. {Naturally,} the probability distribution {over} each SBS's action set consists of {two components: a large probability corresponding to the optimal action and a small equal probability for other actions.} Thus, based on {the} $\varepsilon$-greedy mechanism, each BS can obtain the probability distribution {over the action sets of other BSs} by {observing only} the spectrum allocation action that results in the optimal reward.

The learning rates in ESNs have two different rules: a) fixed value and b) the Robbins-Monro conditions \cite{26}. The Robbins-Monro conditions can be given by: 
\begin{equation}
\left\{ {\begin{array}{*{20}{c}}
{\left( i \right)\;\;\;\;\;\;\;\;{\lambda _j }\left( n \right) > 0,\;\;\;\;\;\;\;\;\;\;\;\;\;\;\;\;}\\
{\left( {ii} \right)\;\;\;\;\;\;\;\mathop {\lim }\limits_{t \to \infty } \sum\limits_{n = 1}^t {{\lambda _j }\left( n \right)}  =  + \infty },\\
{\left( {iii} \right)\;\;\;\;\;\;\;\mathop {\lim }\limits_{t \to \infty } \sum\limits_{n = 1}^t {{\lambda _j^2 }\left( n \right)}  <  + \infty },
\end{array}} \right.
\end{equation}
where $j \in \left\{ {\alpha ,\beta } \right\}$. The learning rate has an effect on the speed of the convergence of our proposed algorithm. However, the two learning rules will converge, as will be shown later in this section.  

We assume that each user can only connect to one BS in the uplink or downlink at each time. We further {consider that each SBS knows the fraction of the time slots on the unlicensed band that is allocated to the LTE-U network}. Based on the above formulations, the distributed RL approach based on ESNs performed by every BS $n$ is shown in Algorithm 1. In line 8 of Algorithm 1, we capture the fact that each SBS broadcasts the action that it adopts now and the probability distribution of action profiles to other BSs.

\begin{algorithm}[!t]\footnotesize
\caption{Reinforcement learning with ESNs}   
\label{alg:Framwork}   
\begin{algorithmic} [1] 
\REQUIRE The set of users' association states, $\boldsymbol{x}_{t,n}^\alpha$ and $\boldsymbol{x}_{t,n}^\beta$;\\ 
\ENSURE initialize $\boldsymbol{W}_n^{\alpha,in}$, $\boldsymbol{W}_n^\alpha$, $\boldsymbol{W}_n^{\alpha,out}$, $\boldsymbol{W}_n^{\beta,in}$, $\boldsymbol{W}_n^\beta$, $\boldsymbol{W}_n^{\beta,out}$, $\boldsymbol{r}_{0,n}^\alpha=0$, $\boldsymbol{r}_{0,n}^\beta=0$  \\ 
\STATE calculate fraction of the time slots $L$ based on (8)
\FOR {time $t$} 
\IF{$rand(.) < \varepsilon$}
\STATE randomly choose one action 
\ELSE
\STATE choose action $\boldsymbol{a}_{n,i}\left(t\right) = \mathop {\arg \max }\limits_{\boldsymbol{a}_{n,i}\left(t\right) } \left( {r_{t,j}^{\beta,i}\left( {{\boldsymbol{x}_{t,n}^{\beta}},\boldsymbol{a}_{n,i}\left(t\right)} \right)} \right)$ 
\ENDIF 
\STATE broadcast the action $\boldsymbol{a}_{n,i}\left(t\right)$ and the optimal action $\boldsymbol{a}_{n,*}$ that results in the maximal value of reward function
\STATE calculate the reward $r^\alpha$ and $r^\beta$ based on (\ref{eq:utn}) and (\ref{eq:rtn}) 
\STATE update the output weight matrix $\boldsymbol{W}_{t,ij}^{\alpha,out}$ and $\boldsymbol{W}_{t,ij}^{\beta,out}$ based on (\ref{eq:w})
\ENDFOR  
\end{algorithmic}
\end{algorithm}  

In essence, at every time instant, {each} BS allocates its spectrum to the users and maximizes its own rate. The users {will then} send {a} connection request to all BSs at each time. {After the proposed algorithm converges to the mixed strategy NE}, each user could get the best rate and each BS maximizes its total rate. Note that the performance of the proposed algorithm can be improved by incorporating a training sequence to update the output weight matrix $\boldsymbol{W}^{out}$. Adjusting the input weight matrix $\boldsymbol{W}^{in}$ and {the} recurrent matrix $\boldsymbol{W}$ appropriately will also improve the accuracy of the algorithm. Algorithm 1 continues to iterate until each user achieves maximal rate and the users' association states remain unchanged.
{The interaction between BSs is independent of the network size and incurs no notable overhead because, in each iteration, each BS needs to only broadcast the optimal action and the action it has currently selected. The complexity of the proposed algorithm is $O(\left| {{\mathcal{A}_{1}}}\right| \times \left| {{\mathcal{A}_{2}}}\right| \times \dots \times \left| {{\mathcal{A}_{N_b}}}\right| )$. This is due to the fact that the worst case for each BS is to traverse all of the possible actions in its action space. However, the proposed ESN-based algorithm is a learning algorithm which can record the utility value of the resource allocation schemes that the BSs have been used, which will greatly reduce the number of iterations. Moreover, after training, the proposed ESN-based algorithm can automatically choose the optimal resource allocation scheme without traversing all of the possible actions again. Therefore, the proposed algorithm can be implemented with reasonable complexity, as will be further shown in the simulations in Section \ref{Section:simulation}.} 

\subsection{Convergence of the ESN-based algorithm}
In the proposed algorithm, we use {ESN} $\alpha$ to store and estimate the reward of each spectrum allocation scheme. Then, we train {ESN} $\beta$ as RL algorithm to solve the proposed game. Since ESNs have two different learning rules to update the output matrix, in this subsection, we prove the convergence of the two learning phases of our proposed algorithm. We first prove the convergence of ESNs with the Robbins-Monro learning rule. Next, we use the continuous time version to prove the convergence of ESNs with the fixed value learning rule. Finally, we prove the proposed algorithm reaches to the mixed strategy NE.

\begin{theorem}\emph{ ESN $\alpha$ and ESN $\beta$ for each BS $n$ converge to the utility function $u_n$ and ${\mathbb{E}}\left[ {{u_n}\left( {{\boldsymbol{a}_n}} \right)} \right]$ with probability 1 when $\lambda_j \ne {1 \mathord{\left/
 {\vphantom {1 t}} \right.
 \kern-\nulldelimiterspace} t}$.}
\end{theorem}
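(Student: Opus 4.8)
The plan is to recognise the readout-update (\ref{eq:w}) as a linear stochastic-approximation (Widrow--Hoff/LMS) recursion and to prove convergence with the classical Robbins--Monro machinery for the decreasing-step rule and with the ODE (mean-field) method for the fixed-step rule. Fix a BS $n$, an index $j\in\{\alpha,\beta\}$ and an action $i$; write $\boldsymbol{\phi}_t=[\boldsymbol{\mu}_{t,n}^j;\boldsymbol{x}_{t,n}^j]$ and $\boldsymbol{w}_t=(\boldsymbol{W}_{t,in}^{j,out})^{\mathrm{T}}$, so that (\ref{eq:rtn}) reads $r_t=\boldsymbol{w}_t^{\mathrm{T}}\boldsymbol{\phi}_t$ and (\ref{eq:w}) reads $\boldsymbol{w}_{t+1}=\boldsymbol{w}_t+\lambda_j\,(e_{t,n}^{j,i}-\boldsymbol{w}_t^{\mathrm{T}}\boldsymbol{\phi}_t)\,\boldsymbol{\phi}_t$. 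This is one stochastic-gradient step on the instantaneous loss $\tfrac12(e_{t,n}^{j,i}-\boldsymbol{w}_t^{\mathrm{T}}\boldsymbol{\phi}_t)^2$, whose population version $\Phi(\boldsymbol{w})=\tfrac12{\mathbb{E}}[(e^{j,i}-\boldsymbol{w}^{\mathrm{T}}\boldsymbol{\phi})^2]$ is strictly convex with positive-definite Hessian $R_j={\mathbb{E}}[\boldsymbol{\phi}\boldsymbol{\phi}^{\mathrm{T}}]$ and therefore has the unique minimiser $\boldsymbol{w}^{j,*}=R_j^{-1}{\mathbb{E}}[e^{j,i}\boldsymbol{\phi}]$.

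For ESN $\alpha$ the target $e_{t,n}^{\alpha,i}=u_n(\boldsymbol{a}_{n,i},\boldsymbol{a}_{-n})$ is a deterministic function of the input $\boldsymbol{x}_{t,n}^\alpha=\boldsymbol{a}_{-n}$, and because $\boldsymbol{W}_n$ has spectral radius below one the reservoir state $\boldsymbol{\mu}_{t,n}^\alpha$ is a fading-memory function of the input history (the echo-state property); for the finitely many action profiles the induced features are generically independent, so the optimal readout interpolates $u_n$ exactly, i.e.\ $\Phi(\boldsymbol{w}^{\alpha,*})=0$. I would then write the update as $\boldsymbol{w}_{t+1}=\boldsymbol{w}_t+\lambda_j\big(h_j(\boldsymbol{w}_t)+\boldsymbol{M}_{t+1}\big)$ with mean field $h_j(\boldsymbol{w})=-R_j(\boldsymbol{w}-\boldsymbol{w}^{j,*})$ and martingale-difference noise $\boldsymbol{M}_{t+1}$. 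Using the Lyapunov function $V(\boldsymbol{w})=\|\boldsymbol{w}-\boldsymbol{w}^{\alpha,*}\|^2$, the conditions $(i)$--$(iii)$ together with the boundedness of $\boldsymbol{\phi}$ and $e^{j,i}$ (both live in compact sets since $f=\tanh$ and the action fractions are bounded) make $V(\boldsymbol{w}_t)$ an almost-supermartingale, and the supermartingale-convergence theorem gives $\boldsymbol{w}_t\to\boldsymbol{w}^{\alpha,*}$ almost surely; for the fixed-step rule the limiting ODE $\dot{\boldsymbol{w}}=-R_j(\boldsymbol{w}-\boldsymbol{w}^{j,*})$ is linear with Hurwitz matrix $-R_j$, so $\boldsymbol{w}^{j,*}$ is globally asymptotically stable and the interpolated iterates track it to this equilibrium. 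Either way $r_{t,n}^{\alpha,i}\to u_n$ with probability one.

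ESN $\beta$ obeys the identical recursion with target $e_{t,n}^{\beta,i}={\mathbb{E}}[u_n(\boldsymbol{a}_n)]$, which it forms by averaging the ESN-$\alpha$ output against the marginals $\pi_{-n,\boldsymbol{a}_{-n}}$. A sequential (two-timescale) argument then closes the proof: once $r^{\alpha,i}\to u_n$ the $\beta$-target stabilises at ${\mathbb{E}}[u_n(\boldsymbol{a}_n)]$, and re-running the stochastic-approximation limit of the previous paragraph yields $r_{t,n}^{\beta,i}\to{\mathbb{E}}[u_n(\boldsymbol{a}_n)]$ almost surely.

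The main obstacle is the non-stationarity of the multi-agent environment: every other BS is simultaneously updating its $\varepsilon$-greedy strategy, so $(e_{t,n}^{j,i},\boldsymbol{\phi}_t)$ is not a stationary process and the mean-field/martingale split above is only valid once the opponents' sampling law is held fixed over a learning epoch. The $\varepsilon$-greedy rule is what makes this workable, since sampling every action with positive probability guarantees the persistent-excitation condition $R_j\succ0$ and keeps the estimator identifiable. This is exactly where the hypothesis $\lambda_j\neq 1/t$ is needed: a gain decaying as fast as the critical rate $1/t$ is extinguished before the readout can track the drifting target (and its asymptotics depend delicately on the gain constant relative to the spectrum of $R_j$), whereas a constant gain or a Robbins--Monro gain with slower decay retains enough adaptivity. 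Establishing the a.s.\ boundedness of $\{\boldsymbol{w}_t\}$ and the tracking/persistent-excitation property under this drift is the technical crux of the argument.
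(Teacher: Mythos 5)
Your route is genuinely different from the paper's. The paper works directly on the scalar outputs: it collapses the readout update (\ref{eq:w}) into the recursion $r_{t+1,n}^{j,i}-r_{t,n}^{j,i}=\lambda_j\left(u_n^{j,i}-r_{t,n}^{j,i}\right)$ (its (\ref{eq:rc})), identifies this as a special form of Sarsa(0) with a linear approximator over a two-state finite MDP, and cites Gordon's theorem for probability-one convergence under the Robbins--Monro rule; the fixed-gain case and the identification of the limit are then handled by solving the continuous-time version $\dot r=\lambda_j(u_n^{j,i}-r)$, giving $r=C\exp(-\lambda_j t)+u_n^{j,i}$, so that the hypothesis $\lambda_j\ne 1/t$ enters only because substituting $\lambda_j=1/t$ into $\exp(-\lambda_j t)$ leaves the residual constant $C\exp(-1)$ in (\ref{eq:limr}). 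You instead stay at the weight level, treat (\ref{eq:w}) as LMS, and run the standard stochastic-approximation program (supermartingale convergence for Robbins--Monro gains, the ODE method for constant gain), characterizing the limit as the least-squares solution $R_j^{-1}{\mathbb{E}}[e^{j,i}\boldsymbol{\phi}]$, with a two-timescale argument to handle ESN $\beta$'s dependence on ESN $\alpha$. Your version exposes hypotheses the paper uses silently: the scalar reduction (\ref{eq:rc}) is only an identity when $\|\boldsymbol{\phi}_t\|^2=1$ and the input repeats, and weight convergence needs persistent excitation $R_j\succ 0$, which you correctly tie to $\varepsilon$-greedy exploration. The paper's route buys brevity and a direct appeal to a known theorem; yours buys a self-contained argument with explicit conditions.

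That said, two genuine gaps keep your sketch from being a complete proof. First, your constant-gain conclusion of almost-sure convergence to the exact limit holds only in the realizable case $\Phi(\boldsymbol{w}^{j,*})=0$; with a nonzero residual, constant-step LMS converges only to an $O(\lambda_j)$ neighborhood of $\boldsymbol{w}^{j,*}$ (in distribution), not almost surely to the point. Your justification of realizability --- ``generically independent features'' --- fails whenever the number of distinct inputs $|\mathcal{A}_{-n}|$ exceeds the feature dimension $N+2K_n$, which is typical here since $\mathcal{A}_{-n}$ is a product over all other BSs' action sets; this must be assumed or argued from the reservoir construction. Second, your reading of the hypothesis $\lambda_j\ne 1/t$ is inconsistent with your own machinery: the schedule $\lambda_t=1/t$ satisfies Robbins--Monro conditions (i)--(iii), so your supermartingale argument delivers convergence in exactly the case the theorem excludes, and the tracking/drift story you offer in its place is precisely the step you defer as ``the technical crux'' rather than prove. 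The paper's dichotomy at $1/t$ is an artifact of inserting a time-varying gain into the constant-gain ODE solution; since reproducing that dichotomy is part of the statement, an argument that cannot distinguish $1/t$ from other Robbins--Monro schedules is incomplete on this point --- independently of the fact that the paper, too, sidesteps the multi-agent non-stationarity by treating $u_n^{j,i}$ as constant in time.
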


\begin{proof} In order to prove this theorem, we first need to prove {that} the ESNs {converge}. Then, we formulate the exact value to which the ESNs converge. 

Based on the Gordon's Theorem \cite{26}, the ESNs converge with probability 1 must satisfy: a) a finite {Markov decision problems (MDPs)}, b) Sarsa learning \cite{31} is being used with a linear function approximator, and c) learning rates satisfy the Robbins-Monro conditions (${\lambda} > 0,\sum\nolimits_{t = 0}^\infty  {{\lambda\left(t\right)} = +\infty ,\sum\nolimits_{t = 0}^\infty  {\lambda^2\left(t\right) <+ \infty } } $). The proposed game only has two states that are request state and stable state, and the number of actions is finite which satisfies a).

From (\ref{eq:rtn}) and (\ref{eq:w}), we can formulate the update equation for ESNs as follows:   
\begin{equation}\label{eq:rc}
r_{t + 1,n}^{j,i} - r_{t,n}^{j,i} = {\lambda _j}\left(u_{n}^{j,i} - r_{t,n}^{j,i} \right), \hspace{1em} j \in \left\{ \alpha, \beta \right\}.
\end{equation}
Actually, (\ref{eq:rc}) is a special form of Sarsa(0) learning \cite{31}. Condition c) is trivially satisfied via our learning scheme's definition. Therefore, we can conclude that the ESNs in our game with the Robbins-Monro learning rule satisfy Gordon's Theorem and converge with probability 1. However, Gordon's Theorem dose not formulate the exact value to which the ESNs converge. Therefore, we use the continuous time version of (\ref{eq:rc}) to formulate the exact value to which the ESNs converges.

To obtain the continuous time version, consider $\Delta t \in [0,1]$ to be a small amount of time and $r_{t +\Delta t ,n}^i - r_{t,n}^i  \approx \Delta t  \times {\alpha _t}\left( {u_{t+\Delta t,n}^i - r_{t,n}^i} \right)$ to be the approximate growth in $r_{n}^i$ during $\Delta t$. Dividing both sides of the equation by $\Delta t$ and taking the limit for $\Delta t \to 0$, (\ref{eq:rc}) can be expressed as follows:
\begin{equation}\label{eq:dr}
\frac{{dr_{t,n}^{j,i}}}{{dt}} \approx {\lambda _j}\left( {u_n^{j,i} - r_{t,n}^{j,i}} \right).
\end{equation}
The general solution for (\ref{eq:dr}) can be found by integration:
\begin{equation}\label{eq:rC}
r_n^{j,i} = C\exp \left( { - {\lambda _j}t} \right) + u_n^{j,i}, \;\; j \in \left\{ \alpha, \beta \right\},
\end{equation}
where $C$ is the constant of integration. As $\exp\left(-x\right)$ is {a} monotonic function and ${\lim _{t \to \infty }}\!\exp( {\! - \alpha_tt}) \\= 0$, when $\lambda_j \ne {1 \mathord{\left/{\vphantom {1 t}} \right.\kern-\nulldelimiterspace} t}$. It is easy to observe that, when $t \to \infty $, the limit of (\ref{eq:rC}) is given by:
\begin{equation}\label{eq:limr}
{\lim _{t \to \infty }}r_n^{j,i}=\left\{ {\begin{array}{*{20}{c}}
{C\exp \left( { - 1} \right) + u_n^{j,i},\;\;\;\;{\lambda _j} = \frac{1}{t}},\\
\;\;\;\;\;\;\;\;\;\;\;\;{u_n^{j,i},\;\;\;\;\;\;\;\;\;\;\;\;\;{\lambda _j} \ne \frac{1}{t}}.
\end{array}} \right.
\end{equation}
From (\ref{eq:limr}), we can conclude that the ESNs converge to the utility function as ${\lambda _j} \ne \frac{1}{t}$, however, as ${\lambda _j} =\frac{1}{t}$, the ESNs converge to the utility function with a constant $C\exp \left( { - 1} \right)$. Moreover, we can see that the convergence of ESNs actually has no relationship with learning rate, it only needs enough time to update. 
This completes the proof.
\end{proof} 

\begin{theorem}\emph{ The ESN-based algorithm converges to a mixed Nash equilibrium, with the mixed strategy probability ${\boldsymbol{\pi}_n^*} \in \Delta\left(\mathcal{A}_n\right)$, $\forall n \in \mathcal{B}$.}
\end{theorem}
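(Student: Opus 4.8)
The plan is to establish convergence in two stages: first showing that the learning dynamics admit a limiting stationary mixed-strategy distribution, and then verifying that this distribution satisfies the best-response inequality \refeq{eq:mNE} that defines a mixed NE. I would rely heavily on Theorem~1, which already guarantees that ESN~$\beta$ at each BS~$n$ converges with probability~$1$ to the expected utility $\mathbb{E}[u_n(\boldsymbol{a}_n)]$ for every action. This means that, asymptotically, the reward estimates $r_{t,n}^{\beta,i}$ driving the $\varepsilon$-greedy rule are exact, so in the limiting analysis I can replace them by the true expected utilities $\tilde u_n(\boldsymbol{a}_{n,i},\boldsymbol{\pi}_{-n})$ computed through the marginalization identity \refeq{eq:eun}.

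First I would argue existence of the stationary distribution. Because the $\varepsilon$-greedy mechanism assigns every action a strictly positive probability $\varepsilon/|\mathcal{A}_n|$, the sequence of joint action profiles $\boldsymbol{a}(t)$ forms a finite-state Markov chain that is irreducible and aperiodic on $\mathcal{A}=\prod_{n\in\mathcal{B}}\mathcal{A}_n$. Standard ergodic theory then yields a stationary distribution over $\mathcal{A}$ whose marginals define the limiting mixed strategies $\boldsymbol{\pi}_n^*\in\Delta(\mathcal{A}_n)$. To upgrade mere existence to almost-sure convergence of the strategy iterates toward this stationary point, I would cast the update as a stochastic-approximation recursion and invoke the Robbins-Monro conditions already imposed on $\lambda_j$, exactly as in the continuous-time argument used for Theorem~1; the step-size conditions ensure the stochastic fluctuations average out and the iterates track the expected dynamics.

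Next I would characterize the fixed point. At the stationary distribution $\boldsymbol{\pi}^*$, each BS~$n$ selects with the dominant probability $1-\varepsilon+\varepsilon/|\mathcal{A}_n|$ the action maximizing $r_{t,n}^{\beta}$, which by Theorem~1 equals $\tilde u_n(\cdot,\boldsymbol{\pi}_{-n}^*)$. Consequently the support of $\boldsymbol{\pi}_n^*$ concentrates on best responses to $\boldsymbol{\pi}_{-n}^*$, and one then checks $\tilde u_n(\boldsymbol{\pi}_n^*,\boldsymbol{\pi}_{-n}^*)\ge\tilde u_n(\boldsymbol{\pi}_n,\boldsymbol{\pi}_{-n}^*)$ for all $\boldsymbol{\pi}_n\in\Delta(\mathcal{A}_n)$, which is precisely \refeq{eq:mNE}. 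Since $\mathcal{G}$ is a finite game, Nash's theorem guarantees that such a mixed equilibrium exists, so the stationary point is consistent with a genuine NE.

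The hard part will be the characterization step, specifically controlling the residual exploration bias: the $\varepsilon$-greedy rule never places all mass on the best response, so strictly speaking the fixed point is an $\varepsilon$-perturbed equilibrium. I would treat $\varepsilon$ as a small parameter and argue that the perturbation term $\varepsilon/|\mathcal{A}_n|$ enters the inequality \refeq{eq:mNE} uniformly, so that $\boldsymbol{\pi}^*$ satisfies the NE condition up to a vanishing $O(\varepsilon)$ slack and coincides with a mixed NE as $\varepsilon\to 0$. A secondary subtlety is that the reward estimates are coupled across agents through $\boldsymbol{a}_{-n}$; I would handle this by noting that, under $\varepsilon$-greedy, each BS reconstructs the marginals $\pi_{-n,\boldsymbol{a}_{-n}}$ from the broadcast optimal actions (line~8 of Algorithm~1), so the coupled expectations in \refeq{eq:eun} remain consistent at stationarity. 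I anticipate that rigorously ruling out non-convergent cycling of the coupled multi-agent dynamics, rather than assuming it via the ergodicity argument above, is the most delicate point and would benefit from identifying additional structure in the utilities \refeq{eq:un}--\refeq{eq:um}.
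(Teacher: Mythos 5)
Your route is genuinely different from the paper's, and considerably heavier. The paper's entire argument is essentially the algebraic check you call your ``characterization step'': invoking Theorem~1 so that ESN $\beta$'s estimates equal ${\mathbb{E}}\left[ {u_n}\left( {\boldsymbol{a}_n} \right) \right]$, it takes $\boldsymbol{\pi}_n^*$ to be the $\varepsilon$-greedy distribution whose dominant mass sits on the maximizing action $\boldsymbol{a}_{n,*}$, takes the deviation $\boldsymbol{\pi}_n$ to be another $\varepsilon$-greedy distribution whose dominant mass sits on some other action $\boldsymbol{a}_n$, and computes directly
\begin{equation*}
{\tilde u_n}\left( {\boldsymbol{\pi} _n^*,\boldsymbol{\pi} _{ - n}^*} \right)-{\tilde u_n}\left( {{\boldsymbol{\pi} _n},\boldsymbol{\pi} _{ - n}^*} \right) = \left(1-\varepsilon\right)\left({\mathbb{E}}\left[ {u_n}\left( {\boldsymbol{a}_{n,*}} \right) \right]-{\mathbb{E}}\left[ {u_n}\left( {\boldsymbol{a}_n} \right) \right]\right) \ge 0,
\end{equation*}
since the two distributions differ only in where they place the mass $1-\varepsilon+\varepsilon/\left|\mathcal{A}_n\right|$ and the remaining terms cancel. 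No Markov-chain ergodicity, no stochastic-approximation tracking, and no appeal to Nash's existence theorem appear anywhere; the learning dynamics enter only through Theorem~1. Your first stage (irreducibility of the joint action process, stationary distribution, Robbins--Monro tracking) is machinery the paper never needs, and it is also the shakiest part of your plan as written: while the ESN weights are still adapting, the action process is not a time-homogeneous Markov chain, so the ergodicity argument cannot even be set up until after the Theorem~1 convergence you invoke later.

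That said, the subtlety you single out as the hard part --- that $\varepsilon$-greedy never places full mass on the best response, so the fixed point is only an $\varepsilon$-perturbed equilibrium --- is precisely where the paper's proof is weakest, and your instinct here is sound. By drawing \emph{both} $\boldsymbol{\pi}_n^*$ and the deviation $\boldsymbol{\pi}_n$ from the $\varepsilon$-greedy family, the paper verifies \refeq{eq:mNE} only over that restricted family, not over all of $\Delta\left(\mathcal{A}_n\right)$ as Definition~1 requires: for fixed $\varepsilon>0$, the pure strategy concentrated on $\boldsymbol{a}_{n,*}$ strictly outperforms $\boldsymbol{\pi}_n^*$ whenever the expected utilities of the actions are not all equal. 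Your proposed resolution ($\varepsilon \to 0$) is cleaner conceptually but does not match the algorithm as actually run, where $\varepsilon$ is a fixed constant (equal to $0.7$ in the simulations). In short, you identified correctly where the difficulty lies; the paper does not resolve it but sidesteps it by implicitly shrinking the set of admissible deviations, which is why its proof is so much shorter than what you outline.
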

\begin{proof} In order to prove Theorem 2, we need to establish the mixed NE conditions in (\ref{eq:mNE}). We assume that the spectrum allocation action {$\boldsymbol{a}_{n,*}$} results in the optimal reward given the optimal mixed strategy $(\boldsymbol{\pi}_{n}^*,\boldsymbol{\pi}_{-n}^*)$, which means that $\pi_{n,\boldsymbol{a}_{n,*}}^*=1 - \varepsilon  + \frac{\varepsilon }{\left| \mathcal{A}_n \right|}$ and $\pi_{n,\boldsymbol{a}_{n,'}}^*= \frac{\varepsilon }{{{\left| \mathcal{A}_n \right|}}}$. We also assume that {$\boldsymbol{a}_{n} \in \mathcal{A}_n/\boldsymbol{a}_{n,*}$} results in the optimal reward given the optimal mixed strategy $(\boldsymbol{\pi}_{n}, \boldsymbol{\pi}_{-n}^*)$. Based on the Theorem 1, ESN $\beta$ of the proposed algorithm converges to ${\mathbb{E}}\left[ {{u_n}\left( {{\boldsymbol{a}_n}} \right)} \right]$. Thus, (\ref{eq:mNE}) can be rewritten as follows:
\begin{equation}\label{eq:proof}
\begin{split}
&{\tilde u_n}\left( {\boldsymbol{\pi} _n^*,\boldsymbol{\pi} _{ - n}^*} \right)-{\tilde u_n}\left( {{\boldsymbol{\pi} _n},\boldsymbol{\pi} _{ - n}^*} \right)\\
&=\sum\limits_{{\boldsymbol{a}_{ n}} \in {\mathcal{A}_{ n}}}[\pi_{n,\boldsymbol{a}_n}^* \sum\limits_{{\boldsymbol{a}_{ - n}} \in {\mathcal{A}_{ - n}}}{{u_n}\left( {{\boldsymbol{a}_n},{\boldsymbol{a}_{ - n}}} \right)\pi_{n,{\boldsymbol{a}_{- n}}}^*}-{\pi_{n,\boldsymbol{a}_n} \sum\limits_{{\boldsymbol{a}_{ - n}} \in {\mathcal{A}_{ - n}}}{{u_n}\left( {{\boldsymbol{a}_n} ,{\boldsymbol{a}_{ - n}}} \right)\pi_{n,{\boldsymbol{a}_{- n}}}^*} }]\\
&=\sum\limits_{{\boldsymbol{a}_{ n}} \in {\mathcal{A}_{ n}}}\left[{\pi_{n,\boldsymbol{a}_n}^*{\mathbb{E}}\left[ {{u_n}\left( {{\boldsymbol{a}_n}} \right)} \right]-\pi_{n,\boldsymbol{a}_n}{\mathbb{E}}\left[ {{u_n}\left( {{\boldsymbol{a}_n}} \right)} \right]} \right]\\
&\mathop = \limits^{\left( a \right)}\left(1-\epsilon\right)\left({{\mathbb{E}}\left[ {{u_n}\left( {{{\boldsymbol{a}_{n,*}}}} \right)} \right]-{\mathbb{E}}\left[ {{u_n}\left( {{\boldsymbol{a}_n}} \right)} \right]} \right)\\
\end{split}
\end{equation}
where (a) is obtained from the fact that $\pi_{n,{\boldsymbol{a}_{n,*}}}^*=\pi_{n,\boldsymbol{a}_n} =1 - \varepsilon  + \frac{\varepsilon }{\left| \mathcal{A}_n \right|}$ and $\pi_{n,{\boldsymbol{a}_{n,'}}}^*=\pi_{n,{\boldsymbol{a}_{n,''}}}= \frac{\varepsilon }{{{\left| \mathcal{A}_n \right|}}}$, ${{{\boldsymbol{a}_{ n,'}}, {\boldsymbol{a}_{n,''}} }\in{{{\mathcal{A}_n}} \mathord{\left/
 {\vphantom {{{A_n}} a}} \right.
 \kern-\nulldelimiterspace} \boldsymbol{a}_{n,*}}}$. 
 Since in the proposed algorithm, the optimal action {${{\boldsymbol{a}_{n,*}}}$} results in the optimal {${\mathbb{E}}\left[ {{u_n}\left( {{\boldsymbol{a}_{n,*}}} \right)} \right]$}, we can conclude that ${{\mathbb{E}}\left[ {{u_n}\left( {{{\boldsymbol{a}_{n,*}}}} \right)} \right]-{\mathbb{E}}\left[ {{u_n}\left( {{\boldsymbol{a}_n}} \right)} \right]}\ge 0$. This completes the proof. 
 \end{proof}

\section{Simulation results}\label{Section:simulation} 
In this section, we evaluate the performance of the proposed ESN algorithm using simulations. We first introduce the simulation parameters. {Then}, we evaluate the performance of {the ESN-based} approximation and estimation {phases} in our proposed algorithm. Finally, we show the improvements in terms of {the} sum-rate of all users and the $50$th percentile of users by comparing the proposed algorithm with three {baseline algorithms based on Q-learning.}
\subsection{System {parameters}}
For our simulations, we use the following parameters. One macrocell with a radius $r_M=500$ meters is considered with $N_s$ uniformly distributed picocells, $W=2$ uniformly distributed {WAPs}, and $U$ uniformly distributed LTE-U users. The {SBSs} share the licensed band with {the} MBS and the unlicensed band with WiFi. Each {WAP} has 4 WiFi users. The channel gain {follows a Rayleigh distribution} with unit variance. The WiFi network is set up based on the IEEE 802.11n protocol {operating} at the $5$ GHz band with a RTS/CTS mechanism. {The path loss models for LTE and LTE-U are based on \cite{6}.} Other parameters are listed in Table  \uppercase\expandafter{\romannumeral1}. The results are compared to three schemes: {a) Q-learning with uplink-downlink decoupling applied to an LTE-U system, b) Q-learning with uplink-downlink decoupling within an LTE system, and c) Q-learning without uplink-downlink decoupling applied to an LTE-U system.} All statistical results are averaged over {5000} independent runs. Hereinafter, we use the term ``sum-rate'' to refer to the total downlink and uplink rates. 

{In our simulations, we} assume {that each Q-learning approach} has knowledge of the action that each BS takes and the entire users' information including interference and location. The update rule of {each Q-learning approach} will be given by:
\begin{equation}
{Q_{t + 1}^i} = \left( {1 - \lambda_q } \right){Q_t^i} +{\lambda_q }\left( {{\boldsymbol{a}_{n,i}}, {{\boldsymbol{a}_{ - n,*}}}} \right),
\end{equation}
where {${\boldsymbol{a}_{ - n,*}}$} represents the optimal action profile of all BSs other than SBS $n$.

\begin{table}\footnotesize
 \caption{
    \vspace*{-0em}SYSTEM PARAMETERS \cite{6}}\vspace*{-1em}
\centering  
\begin{tabular}{|c|c|c|c|}
\hline
\textbf{Parameters} & \textbf{Values} & \textbf{Parameters} & \textbf{Values} \\
\hline
$P_M $ & 43 dBm & $P_P$ & {30} dBm\\
\hline
$P_u$ & 20 dBm & $\alpha$ & 0.05\\
\hline
$\varepsilon$ & 0.7 & $N$ & 1000\\
\hline
{$T_\sigma$} & {9 $\mu$s} & $\delta $ & 0 $\mu$s\\
 \hline
 $F_l^\textrm{UL}$ & 10 MHz & $F_l^\textrm{DL}$ & 10 MHz \\
\hline
$F_u$ & 20 MHz & $\mathcal{L}_b$ & 100 m \\
\hline
$E\left[P\right] $ & 1500 byte & SIFS & 16 $\mu$s\\
\hline
 CTS & 304 {bits}&DIFS & {34} $\mu$s\\
\hline
ACK & 304 {bits} & RTS & 352 {bits}\\ 
\hline
$C$ & 130 Mbps& $\lambda_q, \lambda_\beta$ & 0.06, 0.06 \\
\hline
$\lambda_\alpha$ & 0.08 & $Z$ & 10 \\
\hline
Licensed path loss &\!\!$15.3\!+\!\!37.5\!\log_{10}(m)$\!\! & {$\eta$} & {0.7} \\
\hline
Unlicensed path loss &\!\!$15.3\!+\!\!50\!\log_{10}(m)$\!\!& {$H$} & {416 bits}\\
\hline
\end{tabular}
\vspace{-0.5cm}
\end{table}

\subsection{ESNs {approximation}}

\begin{figure}[!t]
  \begin{center}
   \vspace{0cm}
    \includegraphics[width=9cm]{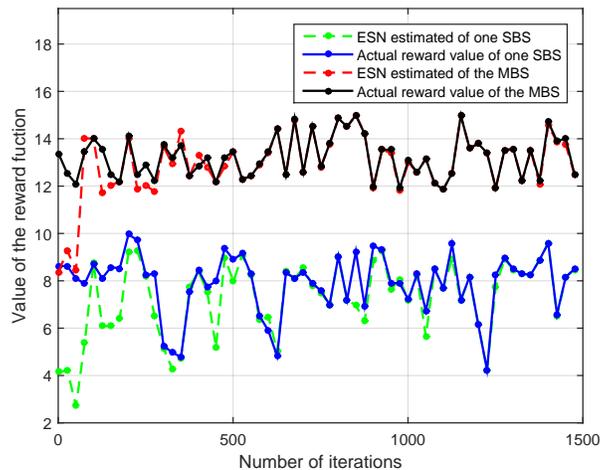}
    \vspace{-0.3cm}
    \caption{\label{fig2} {Approximation of the reward function by ESN $\alpha$} ($N_b=5, U=40, R_w=4$ Mbps).}
  \end{center}\vspace{-0.6cm}
\end{figure}

Fig. \ref{fig2} shows how the {ESN $\alpha$ approximates} the reward function as the number of iteration varies. In Fig. \ref{fig2}, we can see that, as the number of iteration increases, the approximations of {all} BSs improve. This demonstrates the effectiveness of ESN as an approximation tool for utility functions. Moreover, the network state stored in the {ESN} improves the approximation which {updates} the value of {the} reward function according to the network state. Fig. \ref{fig2} also shows that {the proposed approach requires only $500$} iterations to {approximate} the reward function. This is due to the fact that ESN {needs to only} train the output matrix which reduces the {training process}.  

\begin{figure*}[htbp]
  \centering
  {\subfigure[$\lambda_\alpha=0.08$ ]{\includegraphics[width=5.5cm]{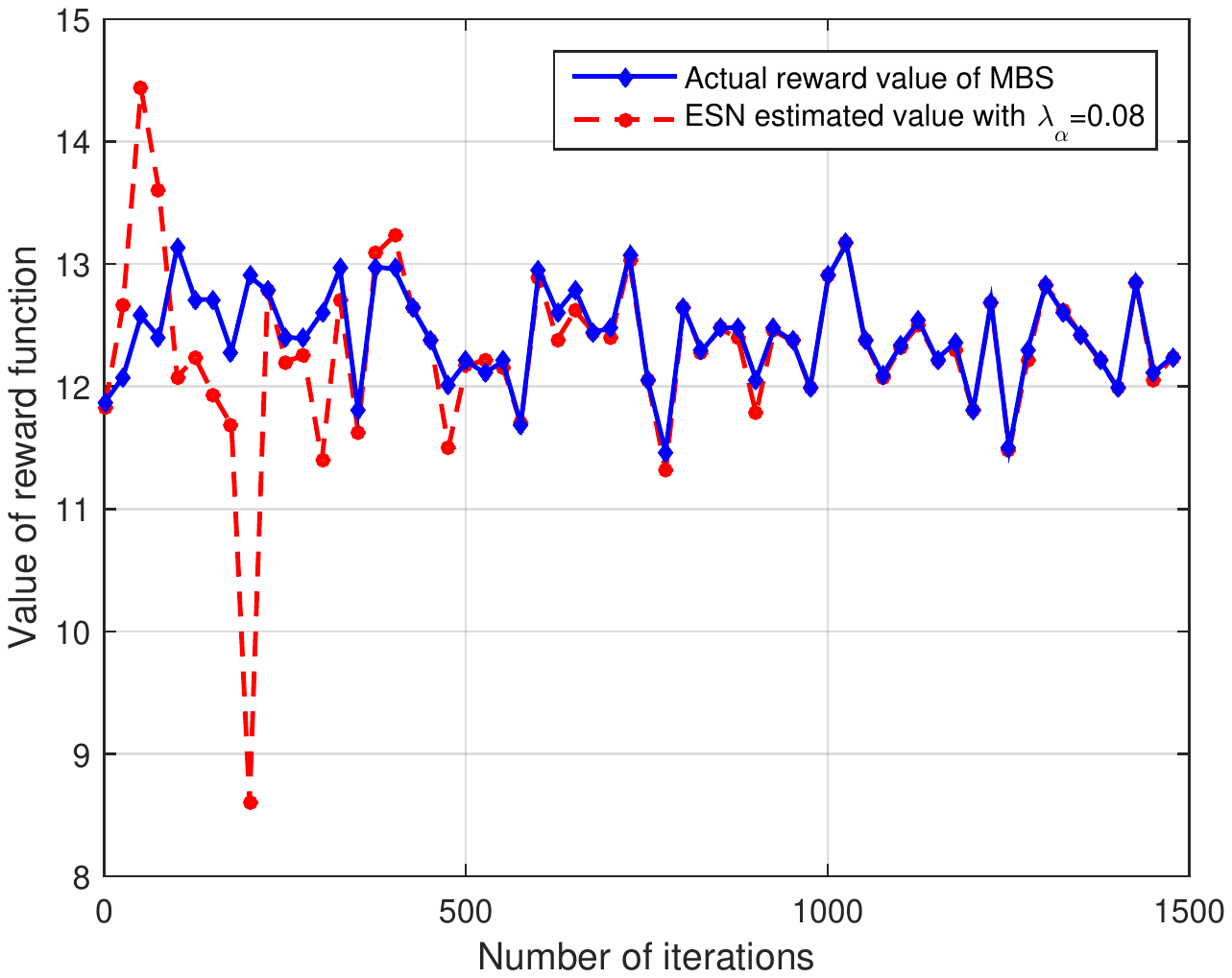}
\label{fig3a}}\hspace{-0.65cm}
\subfigure[$\lambda_\alpha=0.04$ and $\lambda_\alpha=0.01$]{\includegraphics[width=5.5cm]{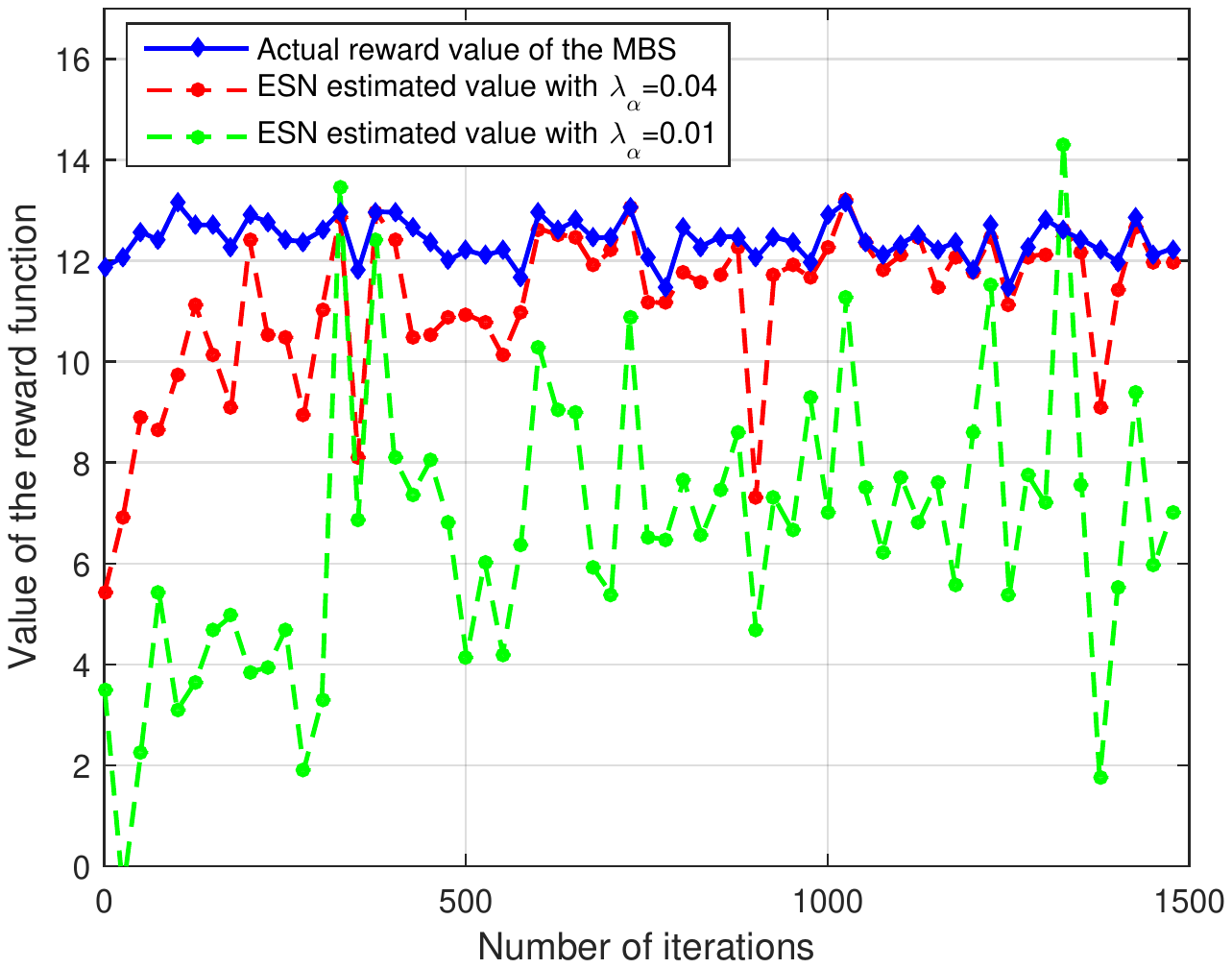}
\label{fig3b}}\hspace{-0.65cm}
\subfigure[$\lambda_\alpha=0.15$]{\includegraphics[width=5.5cm]{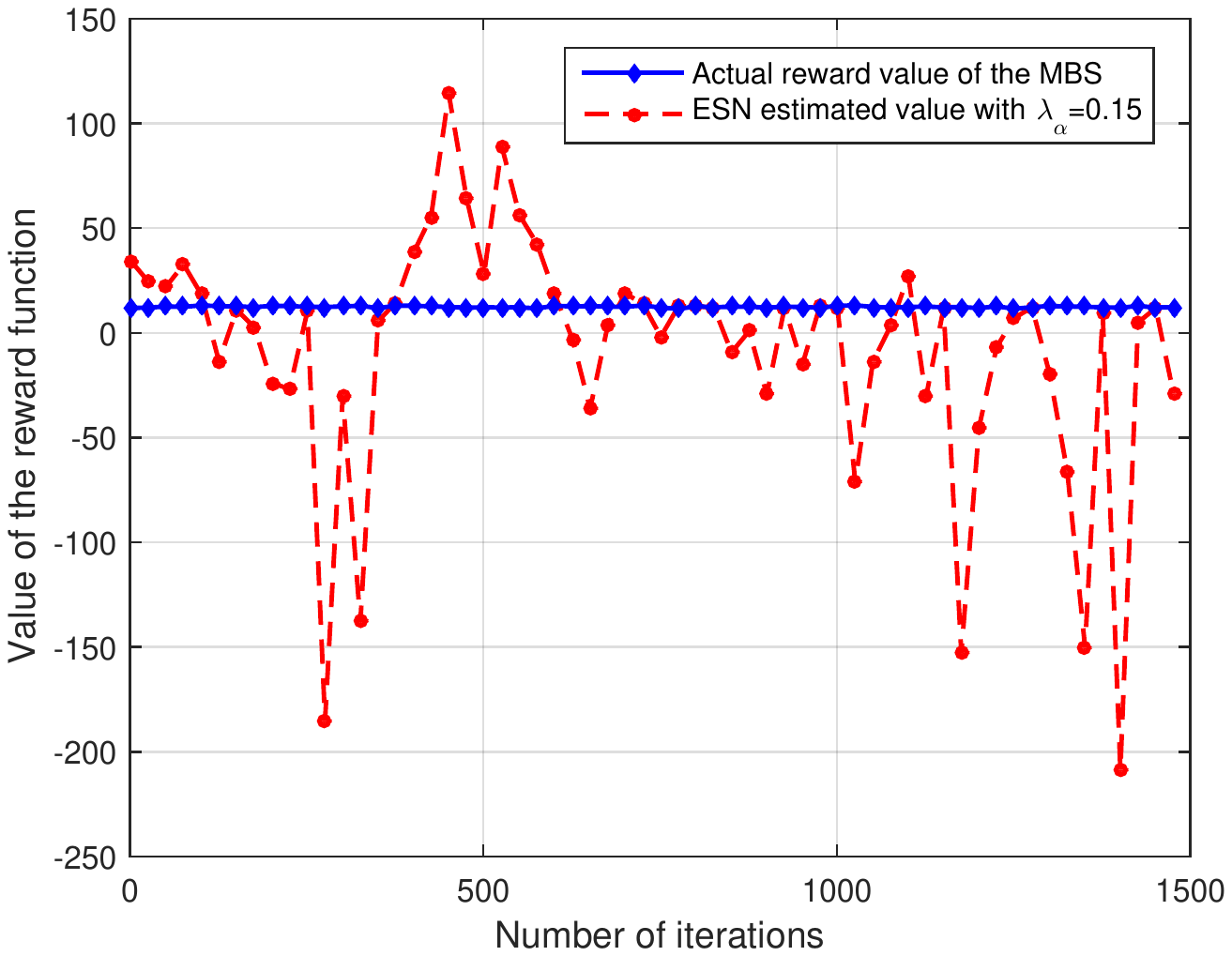}
\label{fig3c}}}
  \caption{An SCN using ESNs with uplink-downlink decoupling in LTE-U. The left hand shows the proposed ESNs learning algorithm running in SBS. The right hand shows the architecture of {an ESN}.}\label{fig3}
\vspace{-0.6cm}
\end{figure*}


In Fig. \ref{fig3}, we show how {ESN} $\alpha$ can approximate the reward function as the learning rate $\lambda_\alpha$ varies. Fig. \ref{fig3a} and Fig. \ref{fig3b} show that, even if the learning rate $\lambda_\alpha$ {changes slightly} from $0.04$ to $0.08$, the {proposed ESN approach} achieves more than $100\%$ improvement in terms of the approximation speed. This is due to the fact that the learning rate {directly affects} the step length of {the ESN adjustment}. However, by comparing Fig. \ref{fig3a} with Fig. \ref{fig3c}, we can see that the approximation of {ESN} $\alpha$ with $\lambda_\alpha=0.15$ {requires} more than {1500} iterations to approximate the reward function, while, for $\lambda_\alpha=0.08$, it only needs {500} iterations. Clearly, when the learning rate $\lambda_\alpha$ is too large, the update value for the output matrix of {ESN} $\alpha$ is also large, which results in a low speed of convergence. Therefore, we can conclude that {the choice of an appropriate learning rate is an important factor that affects} the convergence speed of {ESN} $\alpha$. 

\begin{figure}[!t]
  \begin{center}
   \vspace{0cm}
    \includegraphics[width=9cm]{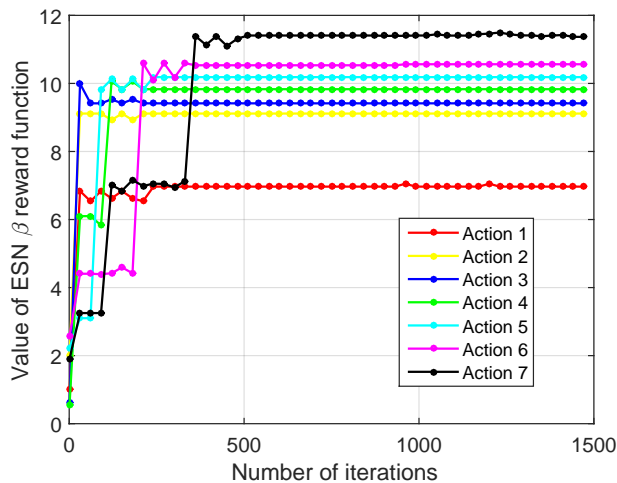}
    \vspace{-0.3cm}
    \caption{\label{fig4} {Performance of the ESN $\beta$ phase of the proposed approach. Each curve corresponds to a different action chosen by the SBS} ($N_b=5, U=40, R_w=4$ Mbps).}
  \end{center}\vspace{-1.2cm}
\end{figure}

In Fig. \ref{fig4}, we show how {the ESN $\beta$ phase of the proposed algorithm allows updating} the expected reward for {one SBS} when it adopts different spectrum allocation actions as the number of iteration varies. {We choose the SBS from the SBSs. Each curve in this figure corresponds to} one spectrum allocation action of {the SBS}. We can see that each {curve} of {ESN} $\beta$ converges to a stable value as the number of iterations increases which implies that by using {ESN}, the SBS can estimate the reward before {deciding on} any action. This is due to the fact that, as {the number of iterations} increases, the approximation of {ESN} $\alpha$ provides the reward that {ESN} $\beta$ {will use} to calculate the expected reward. Fig. \ref{fig4} also shows that, below iteration {$300$}, the expected reward of each spectrum allocation action changes quickly. However, each curve {oscillates only slightly} as the number of iterations is more than {$300$}. The main reason behind this is that, at the beginning, ESN $\alpha$ {requires} some iterations to approximate the reward function. Since {ESN} $\alpha$ has not yet approximated the reward function well, {ESN} $\beta$ can not calculate the expected reward for each action accurately. As the number of iterations {goes} above {500}, {ESN} $\alpha$ {completes} the approximation of {the} reward function, which results in the accurate calculation of {the} expected reward for {ESN} $\beta$.   

\begin{figure}
\centering
\vspace{0cm}
\subfigure[]{
\label{fig7a} 
\includegraphics[width=9cm]{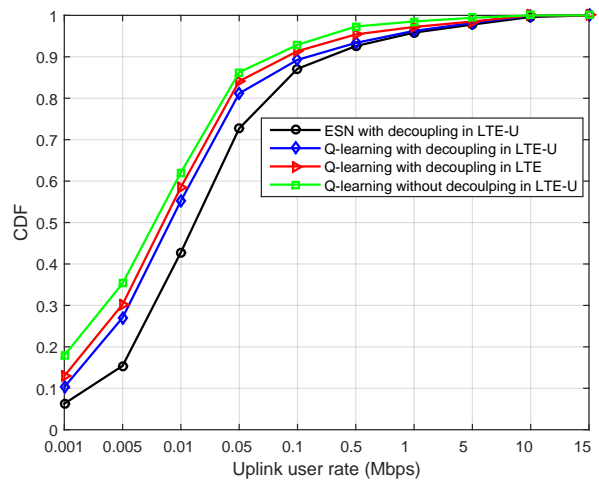}}
\subfigure[]{ 
\label{fig7b} 
\includegraphics[width=9cm]{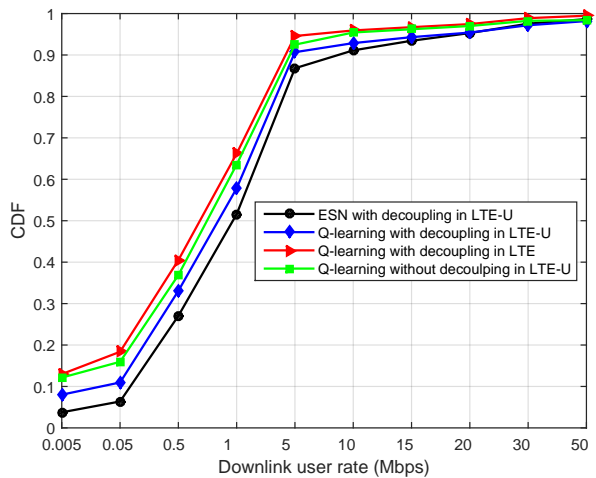}}
  \vspace{-0.2cm}
 \caption{\label{fig7} {CDFs of the downlink and uplink rates} resulting from the different algorithms ($U=40, N_s=5, R_w=4$ Mbps).}
 \vspace{-1cm}
\end{figure}

Fig. \ref{fig7} shows the cumulative distribution function (CDF) of rate in both the uplink and downlink for all the considered schemes. In Fig. \ref{fig7a}, we can see that, the uplink rates of { $7\%$, $11\%$, $13\%$, and $17\%$ of users resulting from} all the considered {algorithms} are below $0.001$ Mbps. This is due to the fact that, in all algorithms, each SBS has a limited coverage which limits the users' {association possibilities}. Fig. \ref{fig7a} also shows that the proposed approach improves the uplink CDF {of up} to {$57\%$ and $143\%$} gains {at a rate of} 0.001 Mbps compared to {Q-learning with decoupling in LTE-U and Q-learning without decoupling in LTE-U}, respectively. 
Fig. \ref{fig7b} also shows that the proposed approach improves the downlink CDF {of up to $120\%$, $28\%$, and $6\%$ for rates of 0.05, 0.5, and 5 Mbps} compared to {Q-learning with decoupling in LTE-U}. {These gains show that the performance of the proposed algorithm is better than that of Q-learning with decoupling in LTE-U.} This is because the proposed approach uses the estimated expected value of the reward function to choose the optimal allocation scheme that results in {an} optimal reward. 

\begin{figure}
\centering
\vspace{0cm}
\subfigure[]{
\label{fig8a} 
\includegraphics[width=9cm]{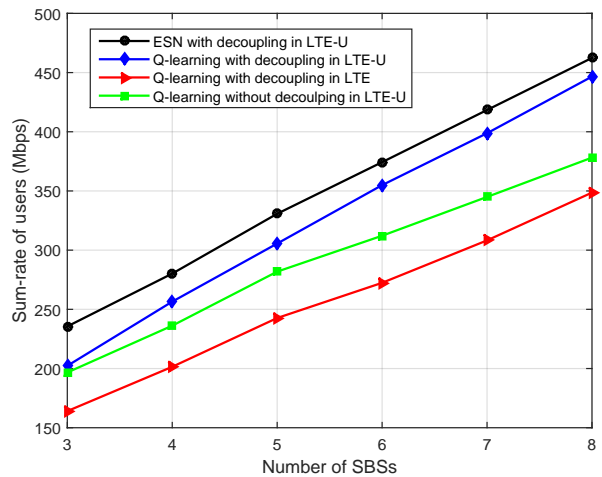}}
\subfigure[]{
\label{fig8b} 
\includegraphics[width=9cm]{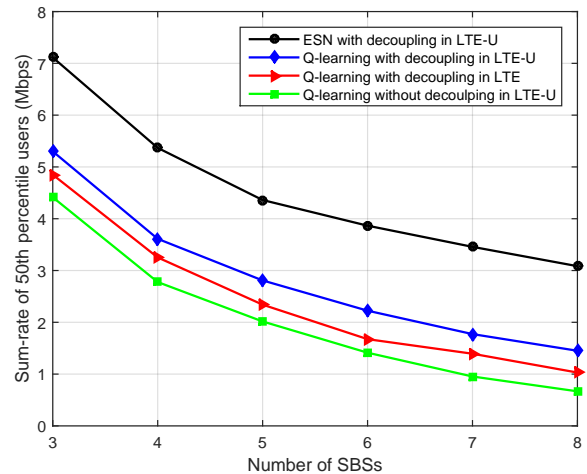}}
  \vspace{-0.2cm}
 \caption{\label{fig8} Sum-rate as the number of SBSs varies ($U=40, R_w=4$ Mbps).}
  \vspace{-1cm}
\end{figure}

In Fig. \ref{fig8}, we show how the total sum-rate varies with the number of SBSs. In Fig. \ref{fig8a}, we can see that, as the number of SBSs increases, all algorithms result in increasing {the} sum-rates because the users have more {SBS choices} and the distances from the SBSs to the users decrease. Fig. \ref{fig8a} also shows that {Q-learning with decoupling in LTE-U} achieves, respectively, up to {$28\%$ and $20\%$} improvements in the sum-rate compared to {Q-learning with decoupling in LTE and Q-learning without decoupling in LTE-U} for the case with $8$ SBSs. {Clearly, the $28\%$ gain is due to the additional use of the unlicensed band and the $20\%$ gain stems from the uplink-downlink decoupling.} 
However, Fig. \ref{fig8b} shows that the rates of the $50$th percentile of users decreases as the number of SBSs increases. This is due to the fact that, in our simulations, each SBS has a limited coverage area that restricts the access of the users. Thus, as the number of SBSs {increases, the interference of the users associated with the MBS increases which results in the decrease of the $50$th percentile of the user} {sum-rate}. By comparing Fig. \ref{fig8a} with Fig. \ref{fig8b}, we can also see that the proposed approach {yields}, respectively, {$3\%$ and $20\%$ gains in terms of the sum-rate compared to {Q-learning with decoupling in LTE-U and Q-learning without decoupling in LTE-U} for $8$ SBSs. The proposed approach also achieves, respectively, around $167\%$ and $400\%$} gains {in terms of} the sum-rate of the $50$th percentile of the users compared to {Q-learning with decoupling in LTE-U and Q-learning without decoupling in LTE-U for $8$ SBSs}. These gains demonstrate that the proposed algorithm {achieves a} better load balancing compared to {each Q-learning approach}. Moreover, Fig. \ref{fig8a} and Fig. \ref{fig8b} also show that {Q-learning with decoupling in LTE achieves} a higher {sum-rate} of the $50$th percentile of users {and a} lower sum-rate {for} all users {compared to Q-learning without decoupling in LTE-U}. It is obvious that the downlink-uplink decoupling improves the rate of edge users.

\begin{figure}
\centering
\vspace{0cm}
\subfigure[Sum-rates of all users]{
\label{fig9a} 
\includegraphics[width=9cm]{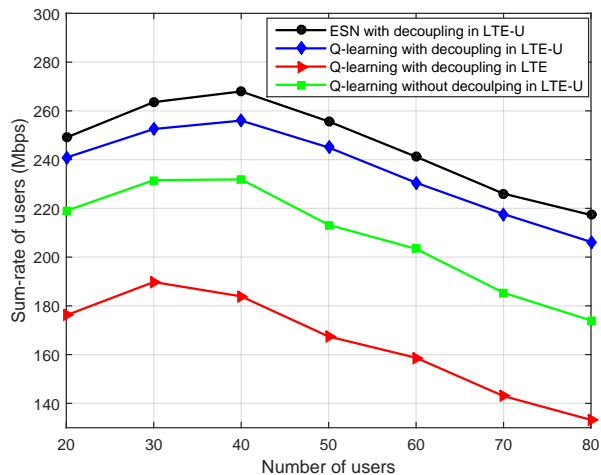}}
\subfigure[Sum-rates of the $50$th percentile of users]{
\label{fig9b} 
\includegraphics[width=9cm]{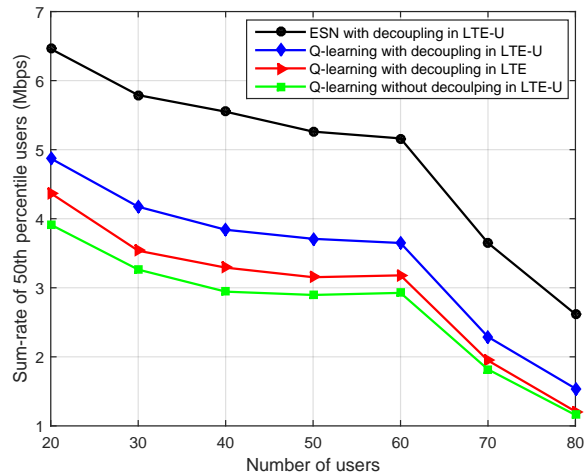}}
  \vspace{-0.2cm}
 \caption{\label{fig9} Downlink and uplink sum-rates of users vs. the number of users ($N_s=4, R_w=4$ Mbps).}
  \vspace{-1cm}
\end{figure}

Fig. \ref{fig9} shows how the total users' sum-rate in both the uplink and downlink changes as the number of {the} users varies. In Fig. \ref{fig9a}, we can see that {the} sum-rate increases then decreases as the number of {the} users increases. That is because each SBS has a relatively small load on the average as the number of {the} users is below $40$. However, as the number of {the} users {goes} above $40$, {the} sum-rate also decreases because each SBS needs to allocate more spectrum to the users who have small SINRs. {From Fig. \ref{fig9b}, we can also see that}, as the number of {the} users increases, the sum-rate of the $50$th percentile of {the} users decreases. {However}, this decrease is much slower for all considered algorithms as the number of {the} users is below $60$. Fig. \ref{fig9b} also shows that {for more than $60$ users}, the {sum-rate} of the $50$th percentile of {the} users for all considered algorithms {decreases} much faster than the case {with less than $60$ users}. This is due to the fact that the {SBSs become overloaded}. Fig. \ref{fig9b} also shows that the proposed algorithm achieves, respectively, up to {$42\%$ and $73\%$} improvements in the sum-rate compared to {Q-learning with decoupling in LTE-U} for the cases with 60 users and 80 users. {This implies that,} by using ESN, each BS can learn and decide on the spectrum allocation scheme better than Q-learning {while reaching a} mixed strategy NE. {Moreover, in Fig. \ref{fig9b}, we can also see that the deviation between Q-learning with decoupling in LTE and Q-learning without decoupling in LTE-U decreases as the number of the users varies.} {The main reason behind this is that, as the BSs become overloaded, some users will not find an appropriate BS to connect to and each BS will not have enough spectrum to allocate to these users.}

\begin{figure}[!t]
  \begin{center}
   \vspace{0cm}
    \includegraphics[width=9cm]{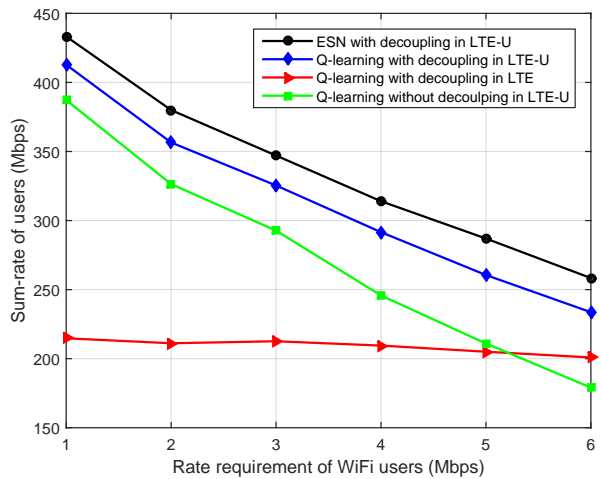}
    \vspace{-0.3cm}
    \caption{\label{fig10} Downlink and uplink sum-rates of users vs. the rate requirement of WiFi users ($U=40, N_s=5$).}
  \end{center}\vspace{-1.2cm}
\end{figure}

\begin{figure}[!t]
  \begin{center}
   \vspace{0cm}
    \includegraphics[width=9cm]{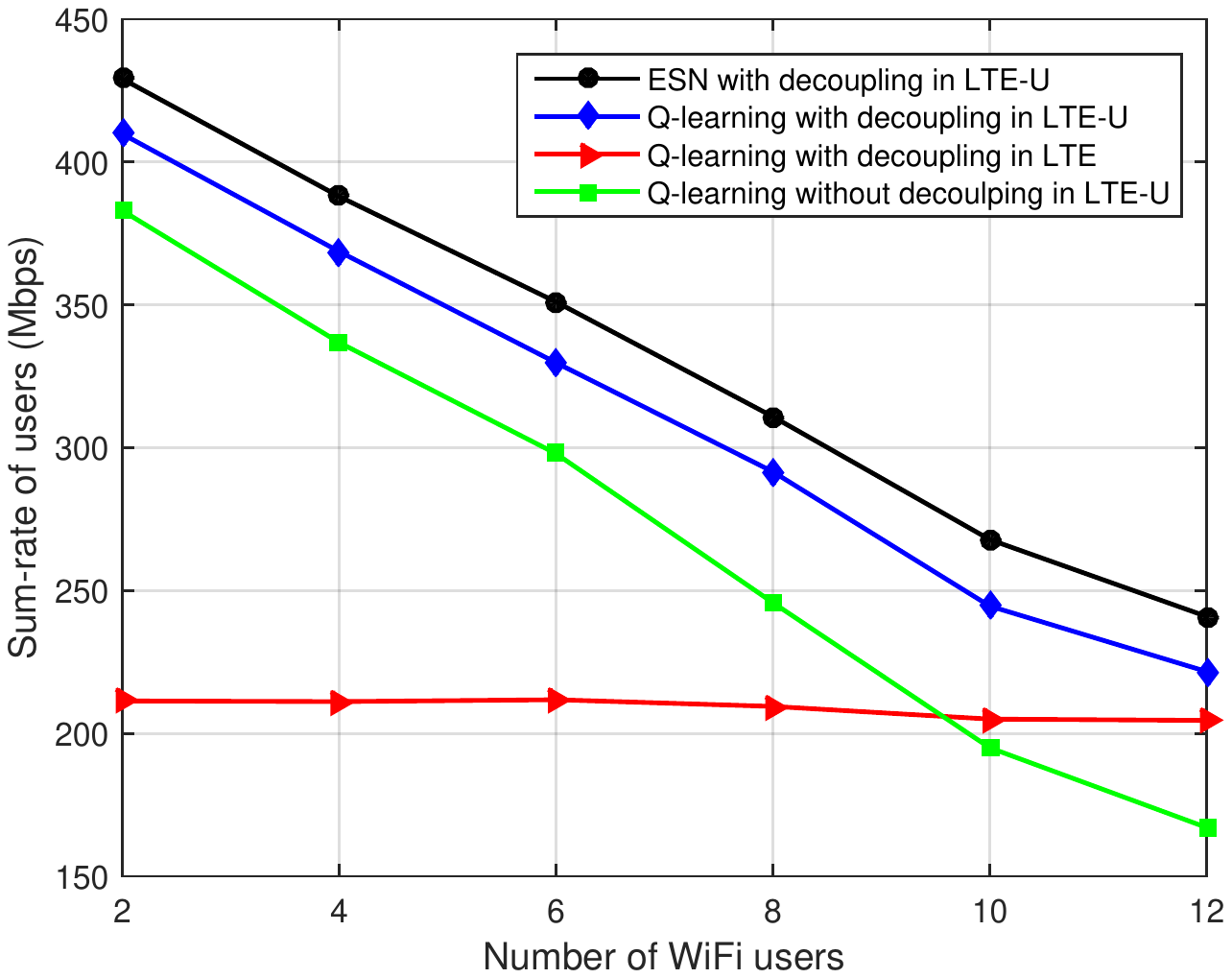}
    \vspace{-0.3cm}
    \caption{\label{fig11} Downlink and uplink sum-rates of users vs. the number of WiFi users ($U=40, N_s=5, R_w=4$ Mbps).}
  \end{center}\vspace{-1.2cm}
\end{figure}    
 
 \begin{figure}
\centering
\vspace{0cm}
\subfigure[The case with convergence of Q-learning]{
\label{fig5a} 
\includegraphics[width=9cm]{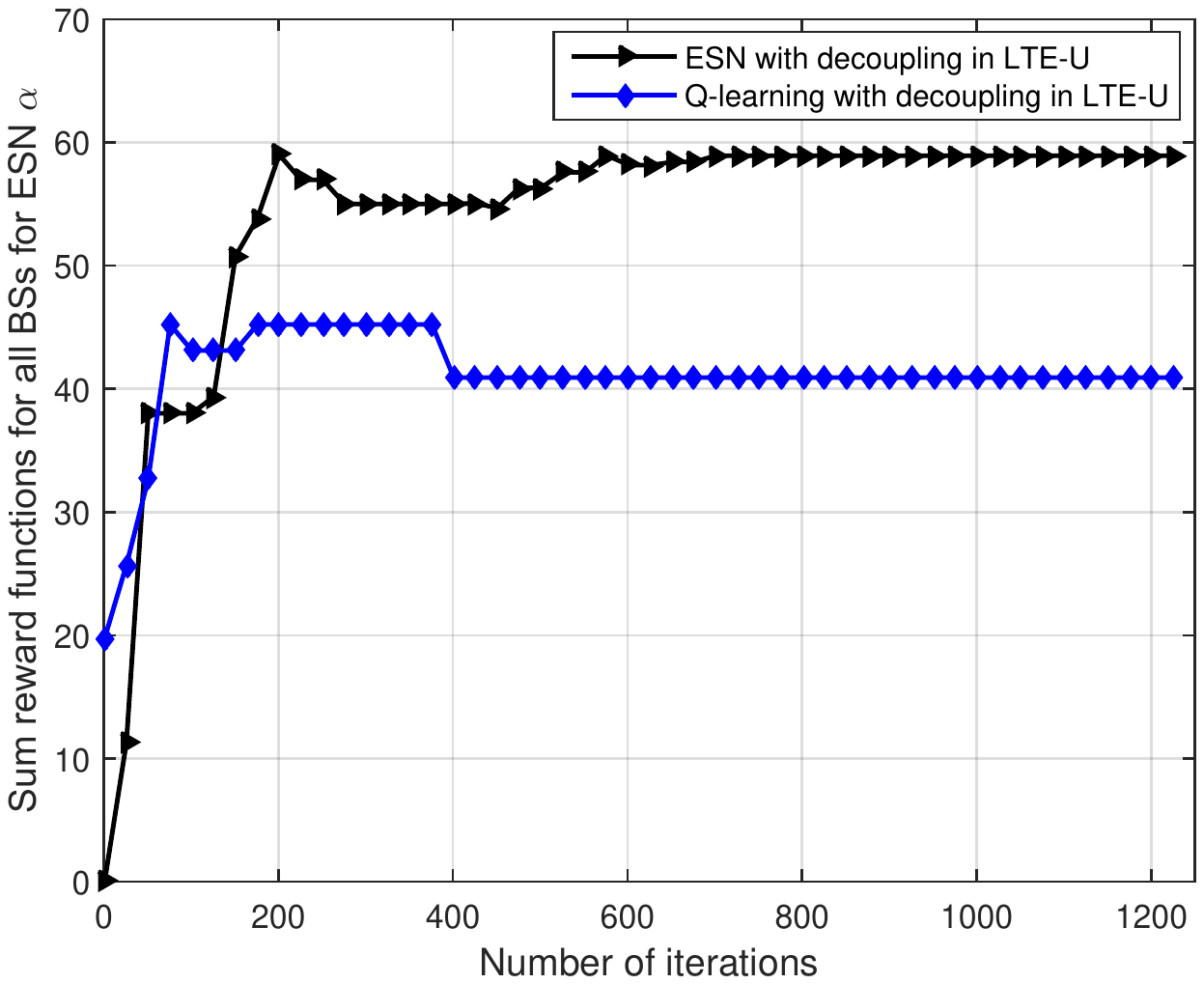}}
\subfigure[The case with divergence of Q-learning]{
\label{fig5b} 
\includegraphics[width=9cm]{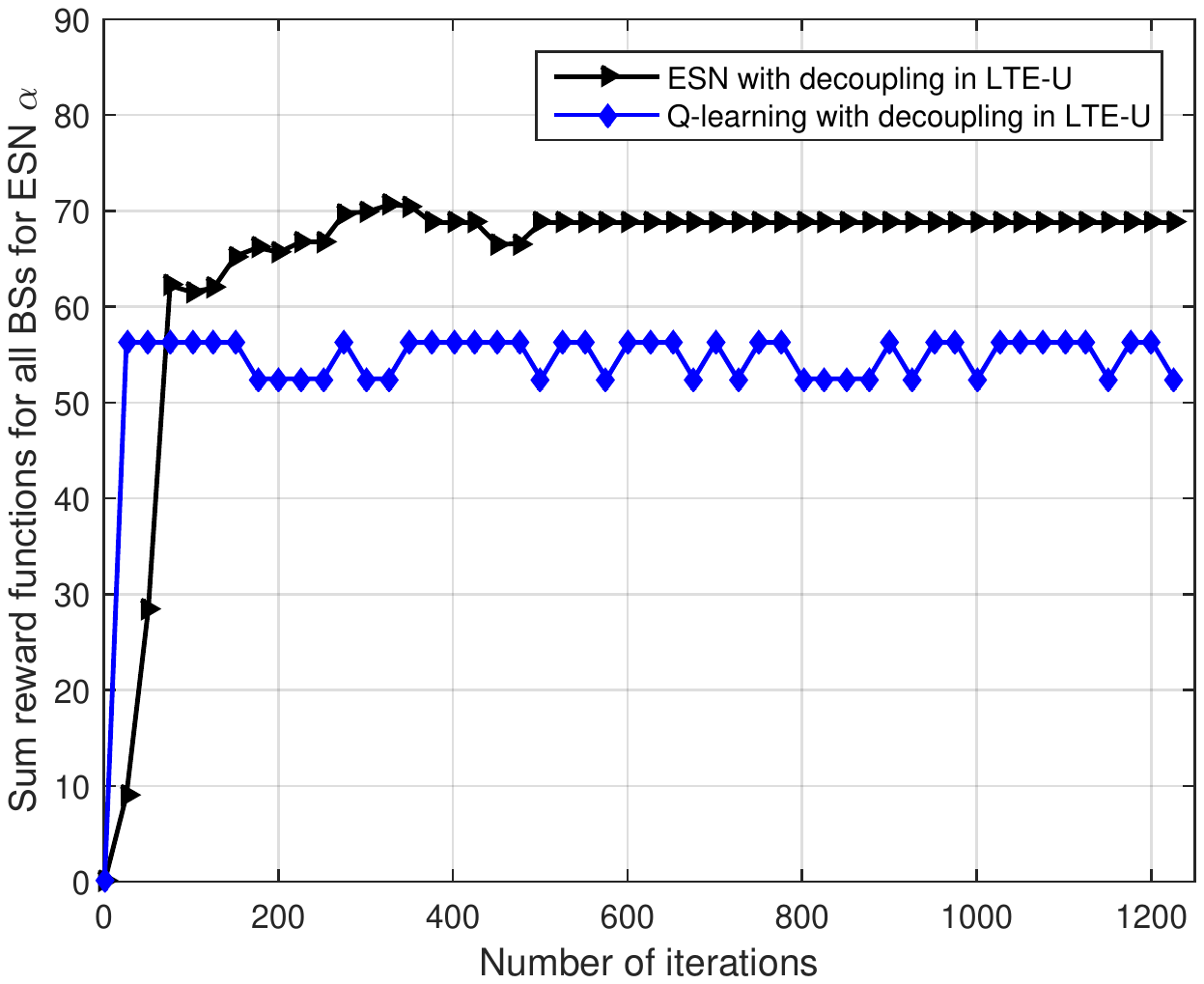}}
  \vspace{-0.2cm}
 \caption{\label{fig5} {Convergence of the proposed algorithm and Q-learning with decoupling in LTE-U} ($N_s=5, U=40, R_w=4$ Mbps).}
  \vspace{-1cm}
\end{figure}
   
Fig. \ref{fig10} shows {how the} total users' sum-rate in both the uplink and downlink changes as the rate requirement of {the} WiFi users varies. In Fig. \ref{fig10}, we can see that {the sum-rates resulting from} all considered algorithms other than {Q-learning with decoupling in LTE} decrease as the rate requirement of {the} WiFi users increases. Fig. \ref{fig10} also shows that {Q-learning without decoupling in LTE-U yields a} lower sum-rate compared to {Q-learning with decoupling in LTE when} the rate requirement of {the} WiFi users is above $5$ Mbps. {These are due to the fact that the fraction of the time slots $L$ on the unlicensed band that is allocated to the LTE-U network decreases as the rate requirement of the WiFi users increases.} {Fig. \ref{fig11} shows how the total users' sum-rate for both the uplink and downlink changes as the number of {the} WiFi users varies. From Fig. \ref{fig11}, we can see that the sum-rate of {the} LTE-U users decreases as the number of the WiFi users increases. This is due to the fact that, as the number of the WiFi user increases, the fraction of time slots on the unlicensed band that is allocated to the LTE-U network decreases.}


\begin{figure}[!t]
  \begin{center}
   \vspace{0cm}
    \includegraphics[width=9cm]{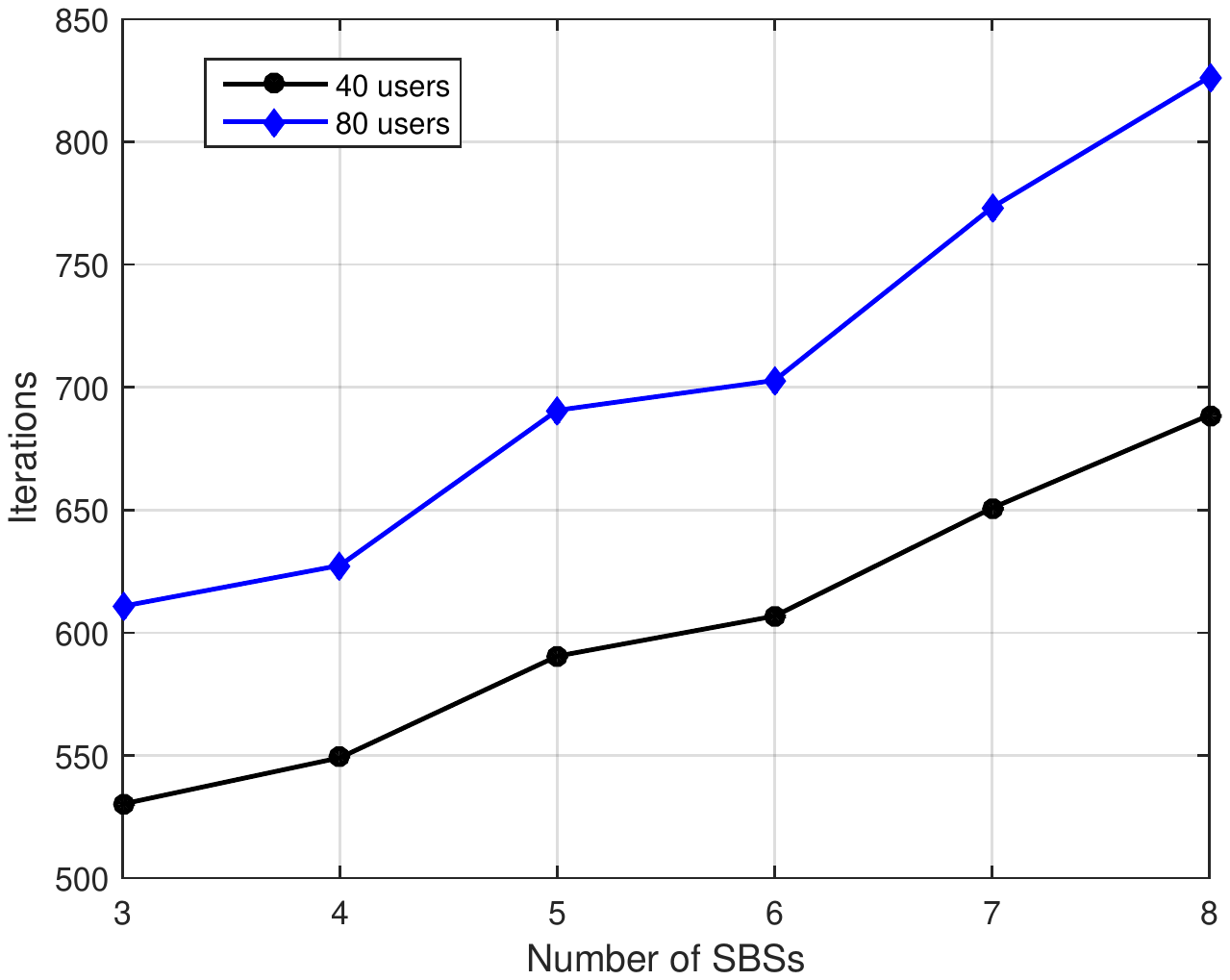}
    \vspace{-0.3cm}
    \caption{\label{fig6}The convergence time of the proposed algorithm as a function of the number of SBSs ($R_w=4$ Mbps).}
  \end{center}\vspace{-1.2cm}
\end{figure}

Fig. \ref{fig5} shows the number of iterations needed till convergence for both the proposed approach and {Q-learning with decoupling in LTE-U}. In this figure, we can see that, as time elapses, the total value of {the} reward {functions} increase until convergence to their final values. In Fig. \ref{fig5a}, we can see that the proposed approach needs {600} iterations to reach convergence. {Moreover, the proposed algorithm exhibits an acceptable increase in terms of the number of iterations needed to converge to a mixed strategy NE compared to Q-learning. This stems from the fact that Q-learning must explore} the entire {information} of all users and BSs to update the Q-table, but the proposed algorithm only needs the action {information} of {the} BSs to update output matrix.     
However, Fig. \ref{fig5b}  shows that the proposed algorithm {eventually converges to an equilibrium, unlike the Q-learning algorithm which oscillates since the action update strategy in Q-learning does not necessarily maximize the expected reward and, as such, Q-learning may not converge to an equilibrium.}



In Fig. \ref{fig6}, we show the convergence time of the proposed approach as the number of SBSs varies for 40 and 80 users. In this figure, we can see that, as the network size increases, the average number of iterations {needed until convergence} increases. Fig. \ref{fig6} also shows that reducing the number of {the} users leads to a faster convergence time. Although {the} users are not players in the game, they affect {the spectrum allocation choices} of each BS. As the number of {the} users increases, the spectrum allocation action for each BS increases, and, thus, a longer convergence time is observed. {In Fig. \ref{fig6}, we can see that the proposed ESN-based algorithm requires less than 800 iterations for the case with $8$ SBSs and $80$ users. Here, we note that such a convergence is significantly faster than existing related learning algorithms such as in \cite{14} and \cite{21} where the number of iterations needed for convergence exceeds 1000, thus further demonstrating that the proposed algorithm can be implemented with an acceptable number of iterations.}

\section{CONCLUSION}
In this paper, we have developed a novel resource allocation framework for optimizing the use of uplink-downlink decoupling in an LTE-U system. We have formulated the problem as a noncooperative game between the BSs that seeks to maximize the total uplink and downlink rates while balancing the load among one another. To solve this game, we have developed a novel algorithm based on the machine learning tools of echo state networks. The proposed algorithm enables each BS to decide on its spectrum allocation scheme autonomously with limited information on the network state. Simulation results have shown that the proposed approach yields significant performance gains in terms of rate and load balancing compared to conventional approaches. Moreover, the results have also shown that the use of ESN can significantly reduce the information exchange for the wireless networks.   

\bibliographystyle{IEEEbib}
\bibliography{references}

\end{document}